\documentclass[a4paper, 11pt]{preprint}
\pdfoutput=1 %

\usepackage{tikz}
\usetikzlibrary{calc}
\usetikzlibrary{positioning}
\usetikzlibrary{patterns, patterns.meta}

\usepackage{pgfplots}
\pgfplotsset{compat=1.11}
\usepgfplotslibrary{fillbetween}

\usepackage[cachedir=unif-minted, frozencache=true]{minted}
\usemintedstyle{default}
\colorlet{CodeBackground}{black!5!white} %
\setminted{encoding=utf8, mathescape, labelposition=topline, frame=lines,
  fontsize=\footnotesize, xleftmargin=2em, xrightmargin=2em}
\setminted[macaulay2]{label={\TT{Macaulay2}}}
\setminted[mathematica]{label={\TT{Mathematica}}}

\usepackage{multibib}
\newcites{soft}{Mathematical software}

\usepackage{caption}
\usepackage{subcaption}
\subcaptionsetup[figure]{labelfont=rm}
\usepackage{wrapfig}

\usepackage{dsfont}
\renewcommand{\BB}[1]{\mathds{#1}}

\usepackage{sansmath}
\newcommand{\SI}[1]{\mathord{\text{\sansmath$#1$}}}

\newcommand{\NN}{\BB N}
\newcommand{\ZZ}{\BB Z}
\newcommand{\QQ}{\BB Q}
\newcommand{\RR}{\BB R}
\newcommand{\GG}{\BB G}
\NewDocumentCommand{\FF}{e{_}}{\BB F\IfValueT{#1}{_{\mkern-3mu#1}}}
\NewDocumentCommand{\Rats}{e{_}m}{#2\IfValueT{#1}{(#1)}}
\NewDocumentCommand{\Affine}{e{^}}{\BB A^{\mkern-3mu#1}}
\newcommand{\card}[1]{\lvert#1\rvert}

\newcommand{\Oh}{\CC O}

\NewDocumentCommand{\Ent}{e{_}}{\SI H^*\IfValueT{#1}{_{\mkern-3mu#1}}}
\NewDocumentCommand{\Aent}{e{_}}{\ol{\SI H^*\IfValueT{#1}{_{\mkern-3mu#1}}}}
\NewDocumentCommand{\Aalg}{e{_}}{\ol{\SI A^*\IfValueT{#1}{_{#1}}}}

\let\gcd\undefined
\DeclareMathOperator{\gcd}{gcd}

\DeclareMathOperator{\im}{im}
\DeclareMathOperator{\rank}{rk}

\DeclareMathOperator{\id}{id}

\TodoColor{Outline}{blue}

\title[The entropy profiles of a definable set over finite fields]{%
  The entropy profiles of\\
  a definable set over finite fields
}

\author{Tobias Boege}
\address{Department of Mathematics and Statistics, UiT The Arctic University of Norway, Tromsø, Norway}
\email{post@taboege.de}

\date{\today}

\subjclass[2020]{%
  94A17, %
  11G25  %
  (primary)
  03C98, %
  14G50, %
  14Q25, %
  05B35  %
  (secondary)%
}

\keywords{%
  entropy region,
  conditional information inequality,
  algebraic matroid,
  definable set,
  finite field,
  rational points,
  Galois stratification%
}

\begin{document}

\begin{abstract}
A definable set $X$ in the first-order language of rings defines a family of
random vectors: for each finite field $\FF_q$, let the distribution be supported
and uniform on the $\FF_q$-rational points of~$X$. We employ results from the
model theory of finite fields to show that their entropy profiles settle into
one of finitely many stable asymptotic behaviors as $q$~grows. The~attainable
asymptotic entropy profiles and their dominant terms as functions of~$q$ are
computable.
This~generalizes a construction of Matúš which gives an information-theoretic
interpretation to algebraic~matroids. %
\end{abstract}

\maketitle

\section{Introduction}
\label{sec:Intro}

Let $\xi = (\xi_i : i \in N)$ be a random vector indexed by a finite set~$N$.
All random variables in this paper are \emph{finite}, i.e., they take only
a finite number of values. %
The \emph{entropy profile} of $\xi$ is the set function $h_\xi\colon 2^N \to \RR$
associating to each $I \subseteq N$ the Shannon entropy of the marginal
distribution $\xi_I = (\xi_i : i \in I)$. %
Entropy measures the average amount of surprise upon observing the value
of a random variable. The~entropy profile of a random vector is a snapshot
of its ``information-theoretic characteristics''. Several important qualities
of how the components of~$\xi$ interact may be deduced from this vector of
$2^N$ real numbers, most notably:
\begin{itemize}[leftmargin=3em, rightmargin=0em]
\item A subvector $\xi_I$ is \emph{functionally dependent} on another $\xi_K$
  if and only if $h_\xi(I\cup K) = h_\xi(K)$.
\item Subvectors $\xi_I$ and $\xi_J$ are \emph{conditionally independent} given
  $\xi_K$ (where $I,J,K$ are mutually disjoint) if and only if $h_\xi(I\cup K) +
  h_\xi(J\cup K) = h_\xi(I\cup J\cup K) + h_\xi(K)$.
\end{itemize}
Informally, the functional dependence of $\xi_I$ on $\xi_K$ means that the
value of $\xi_I$ is almost surely determined by the value of $\xi_K$, since
the outcomes of $\xi_{I\cup K}$ are no more surprising than those of $\xi_K$~alone.
This is a strong form of dependence in~$\xi$. The~conditional independence of
$\xi_I$ and $\xi_J$ given $\xi_K$ means that whenever the outcome of $\xi_K$
is known, knowing the value of $\xi_I$ reveals no additional information about
the value of~$\xi_J$ (and vice versa).

A~variety of applications deals with probability distributions only through
their entropy profiles:
\begin{paraenum}
\item
The statistical models studied in graphical modeling~\cite{HandbookGraphical}
and causality~\cite{PearlCausality}, such as Bayesian networks, are defined
implicitly by conditional independence assumptions (which are, in turn,
derived from a~graph using certain combinatorial rules).

\item
In cryptography, it is common to model the transactions in a cryptographic
protocol using random variables and to formulate notions of (information-theoretic)
security using functional dependence and conditional independence~\cite{InfoCrypto}.
For example, the goal of secret sharing is to devise general schemes for
distributing functions $s_p$ of a secret $s$ to each participant $p \in N$
such that only preselected ``qualified'' subsets $A \in \SR Q \subseteq 2^N$
can recover the secret (in the sense that $s$ is functionally dependent on
$s_A = (s_p : p \in A)$) and all other subsets $B \notin \SR Q$ learn nothing
about the true value of $s$ (meaning $s$ is independent of~$s_B$).

\pagebreak %

\item
Several important quantities in information theory are defined in terms of
linear optimization problems over entropy profiles. This includes the classical
topic of channel capacities~\cite{CoverThomas} and various common information
measures. For instance, the Gács--Körner common information of jointly
distributed~$(\xi,\eta)$ is the maximal entropy of a random variable $\zeta$
which is simultaneously a function of~$\xi$ and a function of~$\eta$.
\end{paraenum}

What these examples hint at is an idea of ``synthetic geometry for random
variables'' in which a set of joint probability distributions is specified
by information-theoretic ``special position'' assumptions on its components.
This modeling language includes functional dependence and conditional
independence predicates which --- with enough goodwill --- resemble
parallelity and special position in geometry. It is instructive to read
the definitions of functional dependence and conditional independence above
with random variables replaced by linear subspaces, random vectors by spans
of subspaces and entropy by dimension. This is a recurring motive in the
works of the late František Matúš who discovered numerous concrete parallels
and connections between conditional independence structures and matroid
theory \cite{MatusMatroids,MatusTwocon,MatusPrinc,MatusAlgebraic}.

In applications, the number $n$ of components of $\xi$ is always fixed as
it represents the number of observables in a statistical model, participants
in secret sharing or nodes in a communication network. The natural ambient
space to study random variables in information-theoretic special position
is the set of all entropy profiles of random vectors of fixed length~$n$.
This set is known as the \emph{entropy region}~$\Ent_n$. It is naturally
embedded in $\RR^{2^n}$ by viewing each $h_\xi$ as a vector of $2^n$ real
numbers. The entropy region is quite delicate, especially on its boundary ---
and every special position assumption puts an entropy profile on the boundary.
But by a small miracle the closure $\Aent_n$ in the euclidean topology of
$\RR^{2^n}$ is a convex cone. The importance of the entropy region and its
geometric structure rests on this insight which is due to Zhang and Yeung~\cite{ZhangYeungCond}.

For practical purposes, it would be helpful to have a finite implicit
description of~$\Aent_n$ by inequalities. As a closed convex cone,
$\Aent_n$ is completely described by its dual cone, whose elements
are linear functionals $\alpha$ with $\alpha(h) \ge 0$ for every
$h \in \Aent_n$. These functionals are known as \emph{linear information
inequalities}. It is known that $\Aent_n$ is a polyhedral cone for $n \le
3$, i.e., it is described by finitely many fundamental linear~inequalities.
Matúš \cite{MatusInfinite} proved that this is no longer the case for $n \ge 4$
by constructing an infinite sequence of distributions whose entropy profiles
approach the boundary of $\Aent_4$ and whose entries decay faster than any
fixed linear function, thereby proving that the boundary is curved.
This part of the boundary was later described by a single \emph{quadratic}
information inequality by Chan and Grant~\cite{ChanGrant}.
With polyhedral descriptions ruled out, the next best result to hope for
is a semialgebraic description using finitely many polynomial inequalities.
The question whether $\Aent_4$ is semialgebraic is still open and is the
true motivation for this paper. Building on extensive computations of
Doughery, Freiling and Zeger \cite{DFZ11} which revealed an exponential
information inequality, Gómez, Mejía and Montoya \cite{GMM} devised a
strategy to disprove semialgebraicity for~$\Aent_4$ and thus for all~$n \ge 4$.
Their idea involves finding counterexamples to a parametric family of
linear inequalities.

There is still very little formalized knowledge about how to design
distributions to achieve prescribed information-theoretic effects,
particularly to make a given entropy functional negative~--- let alone
a parametric family of them.
The~present paper contains no insights into this problem, but appeals to
the principle of experimental mathematics: before we can learn to
\emph{design} distributions for a given purpose, we must first have a
pool of examples and understand how to \emph{test} them for the properties
of interest.
Almost all remarkable families of counterexamples by Kaced and Romashchenko~\cite{KR},
Studený \cite{CondIngleton} and the author \cite{Milan1} are hand-crafted
algebraic curves of \emph{binary} distributions found through experimentation
guided and verified by computer algebra systems.
However, binary random variables obey special conditional independence
laws \cite{SimecekDiss,MatusForks} and it stands to reason that there
exist invalid information inequalities which they cannot disprove.
An increase in the state spaces of the random variables increases
the number of parameters of the distribution exponentially, making
exploration beyond the binary realm very difficult (but not impossible
as \cite{NonDual4} demonstrates).

The topic of this paper is a class of probability distributions derived
from definable sets in the first-order language of rings. Each definable
set is specified by a formula $\varphi$ in which polynomial equations with
coefficients from a finite base field $\FF$ are combined using logical
connectives, and variables may be existentially or universally quantified.
This includes (affine) algebraic varieties, their differences and coordinate
projections. If $x_1, \dots, x_n$ are the free variables in $\varphi$ and
$\GG/\FF$ is a finite field extension, then $\varphi$ defines a subset~$\Rats_{\GG}{X}$
of the affine space~$\Rats_{\GG}{\Affine^n}$. This, in turn, gives rise to
a random vector~$\xi(\GG)$ on $\GG^n$ which is supported and uniformly
distributed on~$\Rats_{\FF}{X}$.
Results from model theory \cite{Definable} imply that the entropy profiles
$h_{\xi(\GG)}$ settle into one of finitely many asymptotic types as the
extension degree $e = [\GG:\FF]$ grows. This type is determined by the residue
class of $e$ modulo a period length~$m$ and denoted $h_{\xi}^{(k)}$ for
$k \in \ZZ/m$. The period length and the leading term of each component of
the asymptotic entropy profiles can be computed using a symbolic algorithm
based on Galois stratification~\cite{EffectiveCounting,FieldArithmetic4ed}.
This naturally yields distributions on arbitrarily large state spaces whose
sizes do not impact the complexity of computing the entropy profile.
When $V$ is an irreducible algebraic variety, then one of its asymptotic
entropy profiles is a refinement of its algebraic matroid~\cite{MatusAlgebraic}.
This link to synthetic geometry and computer algebra holds promise for a
deeper understanding of how to design counterexamples in information theory.

The inspiration for investigating this construction comes from a singular
example of this type due to Kaced and Romashchenko whose remarkable
information-theoretic properties derive from the arithmetic structure
of finite fields. \Cref{sec:Background} introduces the central objects
and questions related to the geometry of the entropy region as well as
required vocabulary from algebraic geometry. With these prerequisites,
we can frame the Kaced--Romashchenko example in our preferred way at
the end of the \namecref{sec:Background} and subsequently generalize it.
\Cref{sec:Unif} descends into a mix of algebraic geometry, number theory
and model theory to derive the main result on the computability of the
entropy profiles~$h_{\xi}^{(k)}$. The field-theoretic algorithms underlying
the computability result have, to our knowledge, never been implemented
in full generality. %
We return to information theory in \Cref{sec:Extension} and outline a
polyhedral geometry framework which makes use of lemmas from information
theory to degenerate a given entropy profile into one with more extreme
properties, following again the lead of Kaced and Romashchenko.

\section{Information inequalities and geometric configurations}
\label{sec:Background}

Throughout let $N$ denote a finite set of cardinality $n$ referred to as
the \emph{ground set}. It indexes a collection of objects under
consideration like random variables or coordinates. The powerset of $N$
is $2^N$. Usually $I, J, K$ denote subsets of~$N$. If $A$ is a set, then
$A^I$ is the set of all functions $I \to A$. Occasionally we will use
$A^n$ instead of $A^N$ for readability.
The letter $\xi$ is reserved for random variables, $V$ for varieties,
$X$ for (definable) sets and $x$ for variables. Sometimes we index them
with $N$ to emphasize that $x_N$ is a vector of variables $(x_i : i \in N)$
instead of a single variable~$x$.

\subsection{Information inequalities}

Let $\xi$ be a random variable taking values from a finite set~$Q$.
The map $x \mapsto x \log x$ is analytic on the real interval $(0,1]$ and
we extend it continuously (but not differentiably) to $[0,1]$ by setting
$0 \log 0 \defas 0$. The \emph{Shannon entropy} of $\xi$ is
\[
  \H{\xi} \defas - \sum_{a \in Q} \Pr[\xi = a] \log \Pr[\xi = a]
\]
and depends only on the \emph{support} of $\xi$, i.e., those values in $Q$
which have a positive probability.
If~$(\xi_i : i \in N)$ is a vector of jointly distributed random variables,
each ranging in a finite set $Q_i$, then each subvector $\xi_I = (\xi_i : i \in I)$,
for $I \subseteq N$, can be viewed as a single random variable with values
in $Q_I = \bigtimes_{i \in I} Q_i$ and as such the above definition of Shannon
entropy applies verbatim.
Recall that the \emph{entropy profile} of a random vector $\xi$ is the function
$h_\xi\colon 2^N \to \RR$ given by $h_\xi(I) = \H{\xi_I}$. Strictly speaking,
Shannon entropy and hence the entropy profile depend on the base of the logarithm.
We fix a base throughout this paper but its value is not important. Changing the
base to a fixed value $b$ amounts to division of the entropy profile~by~$\log b$
and we will do so explicitly when the need arises.

Considering the general theme of \Cref{sec:Intro}, we are led to the following
type of problem: optimize a linear functional over all entropy profiles~$h_\xi$,
where $\xi$ is indexed by a fixed set~$N$, subject to linear constraints on~$h_\xi$.
The collection of all entropy profiles, viewed as points in $\BB R^{2^N}$,
is known as the \emph{entropy region}~$\Ent_N$.
Hence, the problem can be formulated as
\begin{equation}
  \label{eq:inf} %
  \inf \Set{ \alpha(h) : h \in \Ent_N \cap \CC L },
\end{equation}
where $\alpha$ is a linear functional (usually non-negative on $\Ent_N$)
and $\CC L$ is a linear space in $\RR^{2^N}$ (usually cut out by the linear
equations corresponding to functional dependence and conditional independence
assumptions).
Such a problem is solved in two steps. First, in the ``achievability'' step,
a lower bound is established by explicit construction of a sequence of random
vectors~$\xi^{(k)}$. Second, in the ``converse'' step, linear inequalities
on~$\Ent_N$ (so-called \emph{information inequalities}) are combined with
the linear equations of the problem to show that the lower bound is, in fact,
also an upper bound.
This approach is so successful because of a small miracle: the closure~$\Aent_N$
of the entropy region in the euclidean topology of $\RR^{2^N}$ is a convex~cone~\cite{ZhangYeungCond}.
This cone is known as the \emph{almost-entropic region}. Its convexity ensures
that local optima are global, i.e., if the achievability step produces a sequence
of entropy profiles $h_k = h_{\xi^{(k)}} \in \Ent_N$ converging to a local
optimum with value $\alpha^* = \lim_{k \to \infty} \alpha(h_k)$,~then there
must exist a valid inequality of the form
\begin{equation}
  \label{eq:CondIneq} %
  \text{$\alpha(h) \le \alpha^*$ holds for all $h \in \Ent_N \cap \CC L$},
\end{equation}
thereby certifying the optimality of~$h_k$.

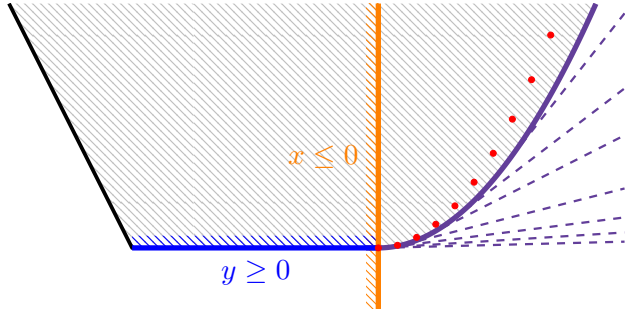
\begin{figure}
\begin{center}
\begin{tikzpicture}[cap=rounded, scale=1.3]
\begin{axis}[xmin=-0.5, xmax=5, ymin=-0.5, ymax=2, hide axis, axis equal]
  \clip (-0.5,-0.5) rectangle (5,2);
  \addplot[name path=lower, draw=none, line width=.5mm, smooth, domain=0:4.767, samples=200] {max(max(0, 2-2*x), 0.64*max(0, (x-3))^2)};
  \addplot[black, line width=.5mm, smooth, domain=0:1.1, samples=200] {max(0, 2-2*x)};
  \addplot[name path=upper, draw=none, domain=0:4.1, samples=100] {2};

  \addplot[pattern=north west lines, pattern color=gray!50]
    fill between [of=lower and upper, split, every segment no 1/.style={fill=none}];

  \foreach \i in {4, 3.6, 3.4, 3.2, 3.1, 3.05, 3.02} {
    \addplot[RoyalPurple, line width=.3mm, smooth, domain=\i:5, samples=100, dashed]
      {0.64*((\i-3)^2 + (x-\i)*2*(\i-3))};
  }
  \addplot[RoyalPurple, line width=.7mm, smooth, domain=2.9:4.767, samples=100] {0.64*max(0, (x-3))^2};

  \addplot[red, only marks, mark size=.3mm, domain=3:4.4, samples=10] {8/9*(x-3)^2};

  \draw[name path=ineqa1, draw=none, line width=.7mm] (2.9,2) -- (2.9,-0.5);
  \draw[name path=ineqa2, orange, line width=.7mm] (3,2) -- (3,-0.5)
    node [pos=0.5, left, xshift=-1mm] {$x \le 0$};
  \addplot[pattern=north west lines, pattern color=orange, line width=.5mm]
    fill between [of=ineqa1 and ineqa2];

  \addplot[name path=ineqb1, draw=none, line width=.7mm, smooth, domain=1:3.01, samples=100] {0.1};
  \addplot[name path=ineqb2, blue, line width=.7mm, smooth, domain=1:3.01, samples=100] {0}
    node [pos=0.5, below] {$y \ge 0$};
  \addplot[pattern=north west lines, pattern color=blue, line width=.5mm]
    fill between [of=ineqb1 and ineqb2];
\end{axis}
\end{tikzpicture}
\end{center}
\caption{The inequality $y = 0 \implies x \le 0$ is essentially conditional
because the slopes of the purple boundary tangent lines approach zero as
$x \to 0$. This~can be seen without knowing the exact equation of the boundary:
the red dots at $(\sfrac1{\lambda}, \sfrac8{9\lambda^2})$ lie inside the region
and violate the unconditional version $x \le \lambda y$ for any $\lambda > 0$.
They approach the boundary arbitrarily well near the extreme~point~$(0,0)$.}
\label{fig:Cond}
\end{figure}

The inequality \eqref{eq:CondIneq} is a \emph{conditional linear information
inequality} since it only holds for entropy profiles restricted to the linear
space~$\CC L$. Let $\CC L$ be the solution set to the system of linear equations
$\beta_1(h) = \dots = \beta_m(h) = 0$. A more compact and schematic way to
write~\eqref{eq:CondIneq} is
\[
  \beta_1 = \dots = \beta_m = 0 \implies \alpha \le \alpha^*,
\]
keeping in mind that this implication is only valid when evaluated at
entropy profiles $h \in \Ent_N$. One~way to prove such a conditional
inequality is by proving a stronger \emph{unconditional inequality}
\begin{equation}
  \label{eq:Uncond} %
  \alpha \le \alpha^* + \lambda_1 \beta_1 + \dots + \lambda_m \beta_m,
\end{equation}
for some Lagrange multipliers $\lambda_1, \dots, \lambda_m \in \BB R$.
If \eqref{eq:CondIneq} arises in this way from a valid unconditional
inequality~\eqref{eq:Uncond}, then it is called \emph{unconditional}
or \emph{inessential}, otherwise it is \emph{essentially conditional}.
\Cref{fig:Cond} illustrates this concept. We also refer to
\cite[Section~II.C]{KR} for more~explanations.

The most well-known class of information inequalities is named after
Claude Shannon who demonstrated their usefulness in \cite{Shannon}.
Besides the conventional normalization $h(\emptyset) = 0$ which assigns
a zero entropy to an empty random variable, the \emph{Shannon inequalities}
specify
\begin{description}[noitemsep, parsep=0.4em]
\item[Monotonicity] $h(I \cup K) \ge h(K)$, and
\item[Submodularity] $h(I \cup K) + h(J \cup K) \ge h(I \cup J \cup K) + h(K)$.
\end{description}
In other words, the entropy region is contained in the polyhedral cone
of \emph{polymatroids}. This usage of the word ``polymatroid'' for a
monotone, submodular set function is standard in information theory, but
it conflicts with combinatorial optimization where a polymatroid is a
polytope associated to such a function; cf.~\cite[Chapter~18]{Welsh}.
Note that the extreme cases of these inequalities correspond precisely
to functional dependence and conditional independence. Hence every
entropy profile can be described in an alternative coordinate system
which specifies how far away it is from satisfying certain functional
dependence or conditional independence constraints.
Because of their importance, we introduce special notation for them:
\begin{align*}
  \CId{I|K}{h} &\defas h(I\cup K) - h(K) \;\; \text{and} \\
  \CId{I:J|K}{h} &\defas h(I\cup K) + h(J\cup K) - h(I\cup J\cup K) - h(K).
\end{align*}
The Shannon inequalities can then be restated as the non-negativity of
these functionals. In the context of random variables $\CId{\xi_I|\xi_K}$
is the \emph{conditional entropy} and $\CId{\xi_I:\xi_J|\xi_K}$ the
\emph{conditional mutual information}.

The~theory of information inequalities is a fascinating and challenging
subject. As tools for solving information-theoretic problems, they have
also found use in combinatorics, graph theory and Kolmogorov complexity;
see, e.g., \cite{Zhao,KolmogorovInfo}. The large corpus of known
information inequalities serves as a testimony to the interest in
these tools. At the beginning of this enterprise is a result by Zhang
and Yeung \cite{ZhangYeungCond} who found a conditional information
inequality which they lifted in \cite{ZhangYeung} to an unconditional one.
This became the first known \emph{non-Shannon} information inequality:
a valid inequality which separates the entropy region from the enclosing
polymatroid cone.
In 2007, Matúš \cite{MatusInfinite} showed that $\Aent_n$ is not polyhedral
for $n \ge 4$ by exhibiting an infinite list of independent information
inequalities.
A major milestone is the list of Dougherty, Freiling and Zeger \cite{DFZ11}
of over 200 unconditional inequalities, a few dozen conditional ones and
several infinite families. They were derived from Shannon inequalities
using a systematic lift-and-project technique.
Kaced and Romashchenko \cite{KR} developed the theory further and gave
the first example of an essentially conditional information inequality.
Today,~Matúš's infinite list is well-understood in this framework.
Studený \cite{CondIngleton} observed that earlier work \cite{MatusI,MatusII,MatusIII}
on the characterization of representable conditional independence
structures on four random variables --- which was achieved using
considerable effort at the time --- follows easily from a class of
conditional information inequalities, the so-called \emph{CI-type
conditional Ingleton inequalities}. Those impose conditional
independence assumptions and guarantee the non-negativity of the
Ingleton functional
\[
  \Ing{A:B|C:D}{h} \defas \CId{C:D|A}{h} + \CId{C:D|B}{h} + \CId{A:B}{h} - \CId{C:D}{h}.
\]
It is well-known in matroid theory that the condition $\Ing{A:B|C:D}{h} \ge 0$
is necessary for a polymatroid $h$ to be linearly representable over a division
ring (in particular a field) \cite{Ingleton}.
The~classification of these inequalities on four random variables was
finished in~\cite{Milan1}.

An essentially conditional information inequality is a compactly encoded
stronger version of an infinite family of unconditional inequalities.
Despite the value they provide, only very few conditional information
inequalities are known to be essentially conditional. It would be
interesting to find the maximal CI-type conditions which give essentially
conditional Ingleton inequalities. Partial results towards this goal are
summarized in \cite{Milan1}. Its resolution would solve one of the most
alluring open question in this area:

\begin{named*}{{The Gómez--Mejía--Montoya problem \cite{GMM}}}
Is the information inequality
\begin{equation*}
  \CId{A:C|D} = \CId{A:D|C} = \CId{B:C|D} = \CId{B:D|C} = 0 \implies \Ing{A:B|C:D} \ge 0
\end{equation*}
essentially conditional?
\end{named*}

\begin{remark}
This question matches Question~1 in \cite{GMM}. We note that, possibly due to
a typo, the development in \cite[Section~4]{GMM} does not actually lead to this
question but a variant in which the assumption $\CId{A:D|C} = 0$ is replaced
by $\CId{C:D|A} = 0$. This discrepancy seems to be the result of an incorrect
reading of the inequality from \cite[Theorem~10]{DFZ11}.

Reading~that~inequality~as
\begin{gather*}
  \left(2^{s-1} - 1\right) \Bigg(
    \Ing{A:B|C:D} - \Big[\CId{B:C|D} + \CId{B:D|C}\Big] +
    \frac{1}{2^{s-1} - 1} \CId{C:D|A} + {} \\
    \frac{2^{s-1} (s - 1)}{2^s - 2}\Big[\CId{A:C|D} + \CId{A:D|C} + \CId{B:C|C} + \CId{B:D|C}\Big]
  \Bigg) \ge 0
\end{gather*}
and following the reasoning of \cite{GMM} leads to our formulation of the
GMM~problem above. As~Gómez, Mejía and Montoya show in \cite[Theorem~22]{GMM},
an affirmative answer to it would imply that $\Aent_n$ is not semialgebraic
for all~$n \ge 4$.
\end{remark}

A~proof of essential conditionality requires a family of entropy vectors
which violates every possible unconditional version~\eqref{eq:Uncond} of
the inequality.
If the linear functionals $\beta_i$ are non-negative on $\Ent_N$ (for
example, if they are functional dependence or conditional independence
functionals), then the problem of proving essential conditionality
simplifies slightly. Note that if there exist Lagrange multipliers
$\lambda = (\lambda_1, \dots, \lambda_m)$ which furnish a valid
unconditional version of the inequality and if $\lambda' \ge \lambda$
component-wise, then $\lambda'$ is also a valid set of multipliers.
Hence, we may assume that all multipliers $\lambda_i$ are equal
(to their maximum) and arbitrarily~large. Equivalently, \eqref{eq:CondIneq}
is essentially conditional if and only if for all $\eps > 0$ arbitrarily
small there is a distribution which violates
\begin{equation}
  \tag{$\text{\ref{eq:Uncond}}'$} \label{eq:Uncond'}
  \eps (\alpha - \alpha^*) \le \beta_1 + \dots + \beta_m.
\end{equation}
Essential conditionality proofs usually consist of a curve of random
vectors parametrized by $\delta > 0$ such that the right-hand side
of~\eqref{eq:Uncond'} tends to zero with $\delta$ faster than the
left-hand side, for any fixed $\eps > 0$. We will study an example
of this in \Cref{sec:KR}.

\subsection{Affine algebraic varieties}

To define the class of probability distributions considered in this paper,
we work with affine varieties defined over a finite base field~$\FF$ and
their rational points over large finite extensions~$\GG/\FF$. To benefit
from the apparatus of algebraic geometry, we shall pass to the direct limit
and view varieties as subsets of affine space over the algebraic closure~$\ol{\FF}$.
On the other hand, the base field makes it easy to carry out symbolic
computations and it is the arithmetic properties of \emph{finite} fields
which are paramount to our goals.
The purpose of this section is to recall some practical facts which help
bridge these two points of view. We~also fix terminology and conventions.
The referenced results can be found in \cite{LangGeometry,LangAlgebra},
\cite{Morandi} and~\cite{CLO}.

Let $\FF$ be a field with algebraic closure $\ol{\FF}$ and $x_N =
(x_i : i \in N)$ a tuple of variables indexed by a finite set~$N$ of size~$n$.
An \emph{(affine algebraic) variety} defined over~$\FF$ is the zero locus
in $\ol{\FF}^n$ of a finite number of polynomial equations $f_1, \dots, f_k
\in \FF[x_N]$, i.e.,
\[
  V = \Set{ a \in \ol{\FF}^n : f_1(a) = \dots = f_k(a) = 0 }.
\]
Contrary to the usage of the word ``variety'' in \cite{LangGeometry},
we do not require $V$ to be irreducible in the Zariski topology; instead,
we use the adjective(s) \emph{(absolutely) irreducible} explicitly.
For any intermediate field $\ol{\FF}/\GG/\FF$ the set of
\emph{$\GG$-rational points} is $\Rats_{\GG}{V} = V \cap \GG^n$.
We shall also refer to these sets as varieties.
The ideal in $\SR I$ in $\FF[x_N]$ generated by the $f_1, \dots, f_k$
carries more information about the variety than the sets of rational
points. %

The polynomial functions $\Rats_{\FF}{V} \to \FF$ form an affine
$\FF$-algebra~$\FF[V]$ called the \emph{coordinate~algebra}. It is
isomorphic to $\FF[x_N]/\SR I$ and hence generated (as a ring) by
the \emph{coordinate functions} $\ol{x_i} = x_i + \SR I \in \FF[x_N]/\SR I$.
The~\emph{dimension} $\dim(V)$ is the Krull dimension of~$\FF[V]$ and
can be computed from its Hilbert series via a Gröbner basis of~$\SR I$
\cite[Chapter~9]{CLO}. In particular it is independent of
extensions of the base field over which $V$ is defined.

As any ideal $\SR I$ has a unique decomposition into finitely many primary
ideals $\SR P_j$ in $\FF[x_N]$ by the Lasker--Noether theorem \cite[Section~4.8]{CLO},
the variety $\Rats_{\FF}{V}$ can be written uniquely as a union of distinct
$\FF$-irreducible varieties $\Rats_{\FF}{V_j}$ which are its
\emph{$\FF$-irreducible components}. The~dimension of~$V$ is the
maximum over the dimensions of the~$V_j$. If $V$ is $\FF$-irreducible
(i.e., its vanishing ideal in $\FF[x_N]$ is prime), then $\FF[V]$ is
an integral domain its field of fractions~$\FF(V)$ has transcendence
degree~$\dim(V)$ over~$\FF$.
A~variety irreducible over $\FF$ may become reducible over an extension~$\GG/\FF$.
Thus, field extensions may change the number of irreducible components but
they never change the dimension.
The irreducible components of $V$ over $\ol{\FF}$ are also called the
\emph{absolutely} irreducible components. Each of them is cut out by
a finite set of polynomials by Hilbert's Basis theorem \cite[Section~2.5]{CLO},
each with finitely many coefficients from $\ol{\FF}$. Hence, there is
already a finite extension of $\FF$ over which $V$ splits into its
absolutely irreducible components.

For $I \subseteq N$ the image of $V$ under the coordinate projection
$\pi_I\colon \ol{\FF}^N \to \ol{\FF}^I$ is generally not a variety.
If $V$ is cut out by an ideal $\SR I \subseteq \FF[x_N]$, then the
Zariski closure of $\pi_I(V)$ is cut out by the \emph{elimination ideal}
$\SR I \cap \FF[x_I]$; cf.~\cite[Section~4.4]{CLO}. A Gröbner basis of
the elimination ideal can be computed from one (in a suitable monomial
ordering) of the original ideal and hence it is possible to compute
$\dim \pi_I(V)$ for each coordinate projection~$\pi_I$.
If $\SR I$ is prime, then so is $\SR I \cap \FF[x_I]$ and the function
$I \mapsto \dim \pi_I(V)$ is (the rank function of) the \emph{algebraic matroid}
associated with~$\SR I$~\cite{AlgebraicAction}. The dimensions are then also
given by the transcendence degrees of the field extensions $\FF(\ol{x_I})/\FF$,
where $\ol{x_i} \in \FF(V)$.
A~\emph{circuit} in this matroid is an inclusion-minimal subset $C \subseteq N$
for which the coordinate functions $\ol{x_C}$ are algebraically dependent
over~$\FF$. The~elimination ideal for a circuit is not only prime but also
principal \cite[Theorem~11]{AlgebraicAction}. Up to units, the irreducible
generator of this ideal is unique and called the \emph{circuit polynomial}.

\subsection{The Kaced--Romashchenko configuration}
\label{sec:KR}

The following conditional information inequality is essentially conditional
\cite[Theorem~3]{KR}:
\begin{equation}
  \label{eq:ABC}
  \CId{A:B} = \CId{A:B|C} = 0 \implies \Ing{A:B|C:D} \ge 0.
\end{equation}
To prove this, one has to exhibit a sequence of probability distributions
$(A,B,C,D)$ depending on a parameter $\eps \to 0$ such that $\CId{A:B} +
\CId{A:B|C} + \eps \, \Ing{A:B|C:D} < 0$. If such a sequence exists, there
will even be one whose entropies are bounded by~$1$ (divide the entropy
profile by $\H{ABCD}$ and use that $\Aent_4$ is a cone to ensure the
existence of a sequence of probability distributions). Thus we may assume
that all entropy functionals are bounded. This implies that such a sequence
has to approach the conditional independence model $\CId{A:B} = \CId{A:B|C} = 0$
but since \eqref{eq:ABC} is valid, the sequence cannot be exactly on it.
The example given in \cite[Section~IV.B]{KR} stands out from all other known
essential conditionality proofs in that it is profoundly geometric. We~give
a slight variation here which better fits the narrative. Fix a finite field~$\FF_q$
(of $\text{characteristic} \neq 2$) and consider the following objects in
the affine plane~$\Rats_{\FF_q}{\Affine^2}$:
\begin{itemize}[noitemsep, parsep=0.5em]
\item Two points $A = (a_1, a_2)$ and $B = (b_1, b_2)$ with $a_1 \neq b_1$.
\item The line $C = \Set{ (x,y) \in \Rats_{\FF_q}{\Affine^2} : y = c_1 x + c_0 }$
  through $A$~and~$B$.
\item A parabola $D = \Set{ (x,y) \in \Rats_{\FF_q}{\Affine^2} : y = d_2 x^2 + d_1 x + d_0 }$
  through $A$~and~$B$ which is non-degenerate, i.e., $d_2 \neq 0$.
\end{itemize}
Choosing the $2+2+2+3 = 9$ parameters of these objects independently and
uniformly at random from the reservoir of values they are allowed to take,
this procedure defines a probability distribution in~$\FF_q^9$. This
distribution is uniform on its support, i.e., any two possible events are
equally likely. Kaced and Romashchenko's original example differs only
in that they choose the line $C$ first from among all non-vertical lines,
then choose $A$ and $B$ on it, allowing them to coincide but stipulating
that in this case $C$ must be tangent to~$D$.

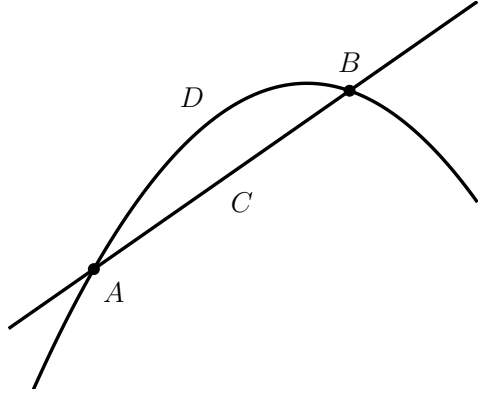
\begin{figure}
\begin{center}
\begin{tikzpicture}[cap=rounded, scale=0.9]
\begin{axis}[xmin=0.0, xmax=5.5, ymin=0.5, ymax=3.75, hide axis]
  \addplot[black, line width=.5mm, smooth, domain=0.0:5.5, samples=100] {1.75*x - 0.25*x^2};
  \addplot[black, line width=.5mm, smooth, domain=0.0:5.5, samples=100] {1.0 + 0.5*x};
  \draw[fill, draw=black] (1,1.5) circle (.8mm) node [below right] {\large\strut$A$};
  \draw[fill, draw=black] (4,3.0) circle (.8mm) node [above] {\large\strut$B$};
  \node[below right] at (2.5,2.25) {\large\strut$C$};
  \node[above left] at (2.4,2.7) {\large\strut$D$};
\end{axis}
\end{tikzpicture}
\end{center}
\caption{A generic point of the Kaced--Romashchenko variety over~$\RR$.}
\label{fig:KR}
\end{figure}

This variety is 5-dimensional but a generic point on it can be visualized
as the configuration of a parabola, a line and two intersection points in
the plane, as in \Cref{fig:KR}.
The entropy functionals appearing in \eqref{eq:ABC} are computed as follows:
\begin{paradesc}
\item[$\CId{A:B}$]
The possible values for $A$ are uniformly distributed in $\FF_q^2$, thus
$\H{A} = 2 \log q$. Given $B$, the choices for $A$ are reduced to $q(q-1)$
but $A$ still ranges uniformly in them, so $\H{A|B} = \log q + \log(q-1)$.
This computes $\CId{A:B} = \H{A} - \H{A|B} = \log(q) - \log(q-1)$.
As $q$ grows large, the dependence between $A$ and $B$ will become smaller
and smaller but it never vanishes exactly.

\item[$\CId{A:B|C}$]
Given $C$, the reservoir of possible values for $A$ and $B$ are reduced by
a factor of $q$, but the analysis is the same: $\CId{A:B|C} = \H{A|C} - \H{A|BC}
= \log(q) - \log(q-1)$.

\item[$\CId{C:D|A}$ and $\CId{C:D|B}$]
Now consider $A$ known. There are $q(q-1)$ equally likely non-degenerate
parabolas through $A$, so $\H{D|A} = \log(q) + \log(q-1)$. If the line $C$
is also given, $D$ ranges in the non-degenerate parabolas through $A$ which
do not have tangent line $C$ at~$A$. This subtracts $q$ parabolas leading
$\H{D|AC} = \log(q) + \log(q-2)$. Thus $\CId{C:D|A} = \log(q-1) - \log(q-2)$
and the value of $\CId{C:D|B}$ is the same.

\item[$\CId{C:D}$]
As $C$ ranges uniformly in all non-vertical lines, $\H{C} = 2\log(q)$.
Now let $D$ be fixed and consider all possible lines~$C$ which intersect
$D$ in two distinct $\FF_q$-rational points. Equivalently, the discriminant
$\Delta = (d_1 - c_1)^2 - 4\,d_2(d_0 - c_0)$ is a non-zero square in~$\FF_q$.
Since the parameters of $D$ are fixed, we view $\Delta$ as a function of
$(c_0, c_1) \in \FF_q^2$. By fixing $c_1$, this becomes an affine-linear
function, and from there it is easy to see that $\Delta(c_0, c_1)$
attains every value in $\FF_q$ exactly $q$~times. Since the set of non-zero
squares in $\FF_q$ has cardinality exactly $\frac{q-1}{2}$, we finally
deduce $\H{C|D} = \log\left(\frac{q(q-1)}{2}\right)$ and hence
$\CId{C:D} = \log(q) - \log(q-1) + \log(2)$.
\end{paradesc}

In total, this shows
\begin{equation}
  \CId{A:B} + \CId{A:B|C} + \eps \, \Ing{A:B|C:D} =
  2\log\left(\frac{q}{q-1}\right) + \eps\,\left(
  2\log\left(\frac{q-1}{q-2}\right) - \log2 \right).
\end{equation}
For any $\eps > 0$, this quantity becomes negative as $q \to \infty$.
This provides the desired sequence of distributions and proves that
\eqref{eq:ABC} is essentially conditional. On this sequence, the
conditional mutual informations appearing in \eqref{eq:ABC} vanish
asymptotically, except for $\CId{C:D}$ which has an absolute $\log 2$
summand. This dependence is explained as follows: the fact that $C$
and $D$ intersect means that $D$ has rational points and is thus a
special kind of parabola. This ``one bit'' of mutual information is
still vanishingly small compared to the marginal entropies $\H{A},
\H{B}, \H{C}, \H{D}$ which tend to infinity proportionally to~$\log q$.
If the entropy is normalized by choosing $q$ as the basis of the
logarithm, then $\CId{C:D}$ will also vanish --- just at a slower rate
than the other mutual informations, making it possible to violate
unconditional versions of \eqref{eq:ABC}.

This analysis is relatively simple, for two reasons. First, the
joint distribution and the examined marginals and conditionals above
are uniform on their respective supports, so entropy merely represents
the size of their supports. Second, the support of the joint distribution
is a (quasi-affine) variety in $\FF_q^9$ defined by the following conditions:
\begin{gather}
  \label{eq:KReqs}
  \begin{gathered}
  a_2 = c_1 a_1 + c_0 = d_2 a_1^2 + d_1 a_1 + d_0, \\
  b_2 = c_1 b_1 + c_0 = d_2 b_1^2 + d_1 b_1 + d_0, \\
  a_1 \neq b_1, \; d_2 \neq 0.
  \end{gathered}
\end{gather}
This construction of equipping a variety over a finite field with the
uniform distribution was considered by František Matúš in one of his
last papers~\cite{MatusAlgebraic}. Let $V$ be an irreducible affine
variety defined over a finite field~$\FF$ and for any field extension
$\GG/\FF$ consider the distribution on~$\GG^n$ uniform and supported
on~$V(\GG)$. The coordinate functions are the components of a random
vector $(\xi_1, \dots, \xi_n)$. Using the Lang--Weil bound, Matúš proved
that for every subset $I$ of these components, the marginal entropy satisfies
\[
  \frac1{\log \card{\GG_n}} \H{\xi_i : i \in I} \to \dim \pi_I(V), \;
  \text{for some tower of fields $\FF = \GG_0 \subseteq \GG_1 \subseteq
  \dots$},
\]
where $\pi_I$ is the projection onto the coordinates~$I$. The limit point
$(\dim \pi_I(V) : I \subseteq N)$ is the \emph{algebraic matroid} of~$V$
and this result shows that algebraic matroids are almost-entropic.

The variety $V$ defined by the equations in \eqref{eq:KReqs} is reducible.
The non-degeneracy conditions $a_1 \neq b_1$ and $d_2 \neq 0$ remove all
but one of the irreducible components and leave behind a Zariski-open set
whose closure is irreducible, smooth and of dimension~$5$.
These properties are true over all fields of large enough characteristic.
The points gained in the closure are contained in a lower-dimensional
variety and thus do not contribute to the entropy profile in a significant~way,
as~we shall see in \Cref{sec:Unif}.

The~Kaced--Romashchenko distributions are thus essentially derived from
the uniform distributions supported on the rational points~$V(\FF_q)$.
The last step in making this connection consists of a simple combinatorial
operation on entropy profiles known as \emph{factoring}. Let $h\colon 2^M \to \RR$
be any set function. Any partition $M = \bigsqcup_{i \in N} M_i$ induces
a map $\varrho\colon 2^N \to 2^M$ via $\varrho(I) = \bigcup_{i \in I} M_i$.
The \emph{factor} of $h$ by~$\varrho$ is then simply the pullback
$\varrho^* h\colon 2^N \to \RR$ defined by $\varrho^* h(I) = h(\varrho(I))$.
If $h$ is the entropy profile of random variables $(X_i : i \in M)$, then
$\varrho^* h$ is the entropy profile of the random vector whose entries are
vector-valued random variables $(X_{\varrho(i)} : i \in N)$.
Thus, factoring preserves entropicness.
The right grouping turns the $9$ coordinate functions $(a_1, a_2, b_1, b_2,
c_0, c_1, d_0, d_1, d_2)$ of~$V(\FF_q)$ into the four geometric objects~$(A, B, C, D)$
described by Kaced and Romashchenko.

The algebraic matroid of the Kaced--Romashchenko configuration is easy to
compute using elimination theory in polynomial rings and Gröbner bases as
shown below using \Macaulay2~\citesoft{M2}:
\begin{minted}{macaulay2}
R = QQ[a0,a1, b0,b1, c0,c1, d0,d1,d2];
I = ideal(
  c0 + c1*a0 - a1,           -- A on C
  c0 + c1*b0 - b1,           -- B on C
  d0 + d1*a0 + d2*a0^2 - a1, -- A on D
  d0 + d1*b0 + d2*b0^2 - b1  -- B on D
);
I = saturate(I, (a0-b0)*d2);
-- This ideal is prime, smooth and of dimension 5.
print { isPrime I, mingens ideal singularLocus I, dim I };

-- Compute the factor of the algebraic matroid.
rho = new HashTable from {
  "A" => {a0,a1}, "B" => {b0,b1}, "C" => {c0,c1}, "D" => {d0,d1,d2}
};
h = new HashTable from apply(subsets(keys rho), L -> (
  v := flatten(apply(L, l -> rho#l));
  sort(toList(set(keys rho) - set(L))) =>
    dim R - #v - codim eliminate(v, I)
));
print toString h;
-- new HashTable from { {} => 0,
--   {A} => 2, {B} => 2, {C} => 2, {D} => 3,
--   {A, B} => 4, {A, C} => 3, {A, D} => 4, {B, C} => 3, {B, D} => 4, {C, D} => 5,
--   {A, B, C} => 4, {A, B, D} => 5, {A, C, D} => 5, {B, C, D} => 5,
--   {A, B, C, D} => 5 }
\end{minted}
\pagebreak %
\noindent
However, it must be emphasized that the algebraic matroid does not
provide a proof that \eqref{eq:ABC} is essentially conditional because
this limit point of the entropy profiles satisfies $\CId{A:B} = \CId{A:B|C}
= \Ing{A:B|C:D} = 0$. The~information about how quickly each of these
quantities vanishes (and that $\Ing{A:B|C:D}$ indeed approaches zero
from \emph{below}) is lost.

\section{Computing entropy profiles of a definable set}
\label{sec:Unif}

The entropy of the random vector $\xi$ supported uniformly on the
$\FF_q$-rational points of a variety~$V$ is precisely $\log \card{V(\FF_q)}$.
Hence, the computation of this entropy is equivalent to counting rational
points on~$V$, which is an important task in number theory. The argument
in \Cref{sec:KR} also requires $q$ to be sufficiently large and therefore
the field cannot be fixed a~priori. The solution to this problem is the
(local) Hasse--Weil zeta function of~$V$ which encodes the exact point
counts of $V$ over the finite extensions of~$\FF_q$. Since it is a rational
function, this information can be finitely represented. However, the zeta
function is very hard to compute.
Fortunately, the emphasis of the computations in \Cref{sec:KR} is not
on the exact point count over $\FF_q$ as much as its asymptotics over
larger and larger field extensions. For information-theoretic purposes,
the Lang--Weil bound~\cite{LangWeil} $\card{V(\FF_q)} = q^{\dim V} +
\Oh(q^{\dim V - \sfrac12})$ is sufficient as it provides the dominant
term as a function of~$q$ (provided that $V$ is absolutely irreducible
over~$\FF_q$). Under a logarithm with base~$q$, the terms of lower order
are asymptotically absorbed and $\H{\xi} \approx \dim V$, an estimate
which is easy~to~compute.

Computing the entropy for proper subvectors $\xi_I$, $I \subsetneq N$,
is more difficult. The marginal distribution is supported on the image of
the coordinate projection $\pi_I(V(\FF_q))$ but this need no longer be a
variety and the distribution need not be uniform on it, as the following
example~shows.

\begin{example}[Roots of a cubic] \label{ex:Cubic}
Consider the irreducible hypersurface $V$ defined by the polynomial
$x^3 + ax^2 + bx + c \in \ZZ[a,b,c,x]$ and its projection onto~$(a,b,c)$.
The projection consists of all monic cubics which have a rational root.
The size of a fiber of this projection is the number of distinct roots
of a given cubic. Using \Macaulay2 we see empirically that this number
is not uniformly distributed:
\begin{minted}{macaulay2}
K = GF(7,3); R = K[a,b,c,x];
tally apply(1 .. 1000000, i -> (
  g := sub(x^3 + a*x^2 + b*x + c, {a=>random(K), b=>random(K), c=>random(K)});
  degree gcd(g, x^(K.order) - x) -- counts distinct roots
))
-- Tally{{0} => 333630}
--       {1} => 498671
--       {2} => 2905
--       {3} => 164794
\end{minted}
To prove it, we prefer to parametrize the cubics in the following way:
\[
  (x-\alpha) \, \left( x^2 + ax + \frac{a^2-\Delta}{4} \right),
\]
so that $\alpha$ witnesses the fact that there is a rational root and
$\Delta$ is the discriminant of the remaining quadratic polynomial.
For the cubic to have two distinct roots, the discriminant parameter~$\Delta$
must be zero which happens with probability approaching zero as $q\to \infty$.
If $\Delta$ is a non-square in $\FF_q$, then the cubic will have only one
solution. This happens in asymptotically $\sfrac12$ of the~cases.
Three solutions appear when the cubic factors completely. These cubics are
alternatively parametrized by their roots. However, this map is generally
six to one since $S_3$ acts on the roots. Hence, a randomly chosen monic
cubic over $\FF_q$ has about $\sfrac16$ chance to have three distinct~roots,
$\sfrac12$ to have one root and $\sfrac13$ to have no root.
The projection of $V$ is thus not a variety: it is full-dimensional in the
sense that its Zariski closure is the entire space, but it has a density
of~$\sfrac23$. %
\end{example}

\pagebreak %

\subsection{The uniform distribution on a definable set}

Varieties and their coordinate projections fall into the broader class of
\emph{definable sets}. We briefly explain this concept from model theory
in our setting. For an introduction to model theory and its applications
to algebra, Marker's book \cite{Marker} is highly recommended. We work in
the first-order language of rings which means that a \emph{formula} is
built up in the following way:
\begin{itemize}[leftmargin=3em, rightmargin=1em, noitemsep, parsep=0.5em]
\item The \emph{atomic} formulas are $f = 0$ for a multivariate polynomial~$f$
  with $\ZZ$-coefficients;
\item if $\varphi$ and $\psi$ are formulas, then so are $\varphi \land \psi$,
  $\varphi \lor \psi$ and $\neg\varphi$;
\item if $\varphi$ is a formula containing a free variable $x$, then
  $\exists x\colon \varphi$ and $\forall x\colon \varphi$ are formulas.
\end{itemize}
Relative to a field $\FF$, we can provide such a formula $\varphi$ with
an \emph{interpretation}: the canonical ring homomorphism $\ZZ \to \FF$
turns atomic formulas into polynomial equations over~$\FF$, logical
connectives are interpreted as usual, and quantified variables range in~$\FF$.
We write $\varphi(x_1, \dots, x_n)$ to indicate the free variables of~$\varphi$.
For any choice of $a_1, \dots, a_n \in \FF$ the substitution $\varphi(a_1,
\dots, a_n)$ produces a formula without free variables, called a
\emph{sentence}, which is either true or false in~$\FF$. Putting the
emphasis on the field as a variable, we may also say that $\FF$ satisfies
or violates the sentence and write $\FF \models \theta$ in case $\FF$
satisfies the sentence~$\theta$.
A set $X \subseteq \FF^n$ is \emph{definable} if there exists a formula
$\varphi(x_1, \dots, x_n; y_1, \dots, y_m)$ and a vector $b \in \FF^m$ of
parameters such that $X = \varphi(\FF^n; b) \defas \Set{ a \in \FF^n :
\FF \models \varphi(a,b) }$. The~coordinate projection $\pi_I(\varphi(\FF^n;b))$,
for $I \subseteq N$, of a definable set is defined by prefixing $\varphi$
with a quantifier block $\exists x_j$ for each~$j \not\in I$ and using
the same parameter~vector.

The parameters $b$ provide field-specific coefficients for use in
the polynomials defining~$X$. To~emphasize that all entries of $b$ belong
to a subring $B \subseteq \FF$, we say that $X$ is \emph{$B$-definable}.
For~example, a variety defined over $\FF$ has a unique decomposition into
absolutely irreducible components; these components may not be definable
over $\FF$ but over a finite algebraic extension.
If~$X$~is $B$-definable, then it makes sense to interpret the formula
defining $X$ in any field containing~$B$. This provides a sequence of
sets over fields of growing size, as in the Kaced--Romashchenko example
in \Cref{sec:KR} where the growth of the field leads to a violation of
the targeted inequality. Note that a $\ZZ$-definable set can be defined
entirely without parameters and hence interpreted over every~field.

\begin{convention*}
We prefer to employ a more geometric language from here on, whenever this
is beneficial. The letter $X$ denoting a definable set replaces the mention
of the defining formula $\varphi$ and the parameter vector~$b$.
The notation $X(\FF)$ for the $\FF$-rational points is extended from
varieties to definable sets and abbreviates~$\varphi(\FF^n; b)$.
\end{convention*}

\begin{definition}
Let $X$ be an $\FF$-definable set in $n$ free variables. To any finite
extension $\GG/\FF$ there is an associated \emph{coordinate random vector}
$\xi(\GG) = \xi(X; \GG)$. This random vector is $\GG^n$-valued and uniformly
supported on $\Rats_{\GG} X$, i.e., for any $a \in \GG^n$:
\[
  \Pr[\xi(\GG) = a] \defas \begin{cases}
    \sfrac1{\card{\Rats_{\GG} X}}, & \text{if $a \in \Rats_{\GG} X$}, \\
    0, & \text{otherwise}.
  \end{cases}
\]
\end{definition}

For any $I \subseteq N$ and $a_I \in \GG^I$ the marginal probability is
$\Pr[\xi_I(\GG) = a_I] = \frac{\card{\Rats_{\GG} X \cap
\pi_I^{-1}(a_I)}}{\card{\Rats_{\GG} X}}$, and thus the estimation
of the marginal entropy hinges on estimating the sizes of $\Rats_{\GG} X$
and of the fibers over its coordinate projection~$\pi_I(\Rats_{\GG} X)$.

\subsection{Measure and decomposition}

It is no accident that the asymptotic sizes of the definable sets
of cubics studied in \Cref{ex:Cubic} are of the form $\mu q^d$ for
$d \in \NN$ and $\mu \in \QQ$. The model-theoretic point of view lends
itself well to describing such uniformity properties of $\Rats_{\FF} X$
as the field~$\FF$ varies. This line of research traces back to works
of Ax~\cite{AxElementary} and his student Kiefe~\cite{KiefePoincare};
for a survey of results until the mid-1990s, see \cite{ChatzidakisSurvey}.
Kiefe proved that the logarithmic derivative of the Hasse--Weil zeta
function of any $\FF$-definable set is rational and hence the exact
point counts of $\Rats_{\GG} X$ for extensions $\GG/\FF$ are available
through a single finite object. But again, this object is very hard to
compute and the information it provides is more detailed than necessary.
Chatzidakis, van den Dries and Macintyre \cite{Definable}, refining
Kiefe's approach, proved the following uniformity result which generalizes
the Lang--Weil bound: %

\begin{theorem}[{\cite{Definable}}] \label{thm:Measure}
Consider a formula $\varphi(x_1, \dots, x_n, y_1, \dots, y_m)$.
There exist finitely many formulas $\psi_k(y_1, \dots, y_m)$, indexed by
$k \in K$, with accompanying $\mu_k \in \QQ$ and $d_k \in \NN$ such that
for every sufficiently large finite field $\FF_q$ and every $b \in \FF_q^m$:
\begin{enumerate}[noitemsep, parsep=0.3em]
\item There exists a unique $k \in K$ such that $\FF_q \models \psi_k(b)$.
\item $\FF_q \models \psi_k(b)$ if and only if $\card{\varphi(\FF_q^n; b)} =
\mu_k q^{d_k} + \Oh(\mu_k q^{d_k-\sfrac12})$.
\end{enumerate}
\end{theorem}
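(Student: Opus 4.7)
The plan is to follow the Chatzidakis--van den Dries--Macintyre approach: reduce $\varphi$ to a normal form via quantifier elimination modulo the theory of pseudo-finite fields, partition the parameter space by the Galois data appearing in this form, and apply the Lang--Weil bound on each stratum.

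\textbf{Step 1 (Galois normal form).} I~would work modulo the complete theory of pseudo-finite fields, which by Ax coincides with the asymptotic theory of finite fields. Refining Kiefe's quantifier elimination, every formula $\varphi(x;y)$ in the language of rings becomes equivalent to a finite disjunction of ``Galois formulas'': each disjunct asserts the solvability of a system of polynomial equations in $(x,y)$ together with the stipulation that a certain auxiliary resolvent polynomial factor in a prescribed way over the algebraic closure. The factorization pattern of the resolvent is exactly what records the Frobenius conjugacy class acting on its roots.

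\textbf{Step 2 (Stratification and counting).} Collecting the finitely many resolvents appearing in the normal form into a single one, the conditions imposed by $\varphi$ on the parameter vector~$b$ depend only on the Frobenius conjugacy class of~$b$ relative to this resolvent. This class is itself definable by a formula $\psi_k(y)$, where $k$ ranges over the finitely many possible classes; for $\FF_q$ sufficiently large the~$\psi_k$ partition the parameter space, establishing~(i). Fixing a stratum on which $\FF_q \models \psi_k(b)$, the set $\varphi(\FF_q^n;b)$ differs from the $\FF_q$-rational points of a scheme $W_k$ (the same for every~$b$ in the stratum, up to a base change that does not change point counts) by a subset of strictly smaller dimension. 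Applying the Lang--Weil bound to each top-dimensional absolutely irreducible component of $W_k$ that is fixed by the Frobenius yields
\[
  \card{\varphi(\FF_q^n; b)} = \mu_k q^{d_k} + \Oh\!\bigl(\mu_k q^{d_k - \sfrac12}\bigr),
\]
where $d_k = \dim W_k$ and $\mu_k \in \QQ$ counts the Frobenius-fixed top-dimensional components (weighted by multiplicities coming from the Galois condition). This proves~(ii).

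\textbf{Main obstacle.} The technical heart lies in Step~1: securing the Galois normal form \emph{uniformly} in the parameters~$b$, so that the strata~$\psi_k$ are themselves first-order definable in the bare language of rings and independent of~$q$. This requires a Čebotarev-style converse turning the existence of a prescribed Frobenius conjugacy class into an existence-of-solutions statement, together with a careful induction eliminating quantifiers one at a time while preserving the Galois structure. Once the normal form is in place, the stratification and Lang--Weil counting are largely bookkeeping; the only remaining subtlety is to verify that the discrepancy between $\varphi(\FF_q^n;b)$ and $W_k(\FF_q)$ lies in strictly smaller dimension and is thus absorbed into the error term.
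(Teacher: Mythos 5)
The paper does not prove this result; it is imported verbatim from Chatzidakis--van den Dries--Macintyre (the cited reference \cite{Definable}), so there is no internal proof to compare against. Your outline does track the skeleton of the CDM argument --- quantifier elimination down to Galois formulas over the theory of pseudo-finite fields, stratification of the parameter space by Frobenius data, and a Lang--Weil/Chebotarev estimate on each stratum --- and your ``main obstacle'' paragraph correctly singles out the uniform Galois normal form as the technical heart. (One small slip: the factorization pattern of the resolvent that records the Frobenius class is its factorization \emph{over $\FF_q$}, not ``over the algebraic closure,'' where every polynomial splits completely.)

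There is, however, a genuine gap in Step~2. You claim that on each stratum, $\varphi(\FF_q^n;b)$ ``differs from the $\FF_q$-rational points of a scheme $W_k$ \ldots\ by a subset of strictly smaller dimension,'' and you repeat this in the final sentence of your closing paragraph. This reduction is false in general. Take the parameter-free formula $\varphi(x_1,x_2) := \exists t\colon t^2 = x_1$ in the plane: the definable set has about $\tfrac12 q^2$ points, yet no $2$-dimensional subscheme of $\Affine^2$ has point count $\tfrac12 q^2 + \Oh(q^{3/2})$ --- the leading coefficient of a variety's point count is the integer number of Frobenius-stable top-dimensional absolutely irreducible components. The rational measure $\mu_k$ in the theorem does \emph{not} come from counting components of a variety up to a lower-dimensional discrepancy; it comes from the Chebotarev density theorem for Galois covers over finite fields: one counts the $\FF_q$-points of a normal variety $W_k$ whose Frobenius lies in a prescribed union of conjugacy classes $\CC C$ in the Galois group $G$ of a fixed cover $C_k \to W_k$, and this subset has density $\card{\CC C}/\card{G}$ among $W_k(\FF_q)$ with error $\Oh(q^{d_k - 1/2})$. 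In other words, Chebotarev is not just a technical device used in Step~1 to secure the normal form --- it is the source of the rational coefficient $\mu_k$ and must appear explicitly in the counting of Step~2, where Lang--Weil alone is insufficient.
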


As a corollary to this result, the cardinality of a definable set
$X(\FF_q)$ can be estimated by an expression of the form~$\mu q^d$
up to a precision of $\Oh(q^{d-\sfrac12})$. If $\mu \neq 0$ and provided
that $q$ is large, these two numbers are necessarily unique among the
finitely many choices given by \Cref{thm:Measure}.
As~discussed in \cite[Section~4]{Definable}, the exponent~$d$ is the
dimension of the Zariski closure~of~$X(\FF_q)$. The rational coefficient~$\mu$
measures the size of~$X(\FF_q)$ relative to $\FF_q^d$. It can be less than one,
as seen in \Cref{ex:Cubic}, or greater if $X(\FF_q)$ has multiple
irreducible components of maximal dimension.
Both of these numbers depend on the field over which $X$ is interpreted.

\begin{definition}
Let $X$ be an $\FF$-definable set and $\GG/\FF$ sufficiently large,
so that \Cref{thm:Measure} guarantees the existence and uniqueness of
$\mu$ and $d$ with $\card{\Rats_{\GG} X} = \mu \card{\GG}^d +
\Oh(\mu \card{\GG}^{d-\sfrac12})$. Let~$\mu_{\GG}(X) \defas \mu$ be
the \emph{measure} and $\dim_{\GG}(X) \defas d$ the \emph{dimension}
of $X$ over~$\GG$.
\end{definition}

\begin{example}[Square root of $-1$] \label{ex:Sqrt-1}
Consider the set $X$ defined by $\exists x\colon x^2 + y^2 = 0$ and observe
$X(\FF_q) = \FF_q$ if $-1$ is a square in $\FF_q$ and $X(\FF_q) = \Set{0}$
otherwise. The existence of a square root of $-1$ in finite fields is easily
characterized in terms of their size $q = p^e$, thus:
\[
  \card{X(\FF_q)} = \begin{cases}
    q, & \text{$p \equiv 1 \pmod 4$}, \\
    q, & \text{$p \equiv 3 \pmod 4$ and $e$ even}, \\
    1, & \text{$p \equiv 3 \pmod 4$ and $e$ odd}, \\
  \end{cases}
\]
which is non-trivially periodic in $p$ (when $e$ is odd) and in~$e$
(when $p \equiv 3 \pmod 4$).
\end{example}

\Cref{thm:Measure} furnishes another essential tool: the formulas $\psi_k$.
Let $X$ be defined by a formula $\varphi(x_1, \dots, x_n; y_1, \dots, y_m)$
with parameter vector~$b \in \FF^m$. For any $I \subseteq N$ and
$J = N \setminus I$ divide the variables into $x_I$, $x_J$ and $y$ and
apply \Cref{thm:Measure} to eliminate $(x_I, y)$. The resulting formulas
$\psi_k(x_I; y)$ with parameters $y=b$ define sets $Y_k(\GG) \subseteq \GG^I$
which, for large enough field extensions $\GG/\FF$, partition~$\GG^I$ such
that if $a_I \in \Rats_{\GG}{Y_k}$, then
\[
  \card{\Rats_{\GG} X \cap \pi_I^{-1}(a_I)}
  = \card{\varphi(a_I, \GG^J; b)}
  = \mu_k \card{\GG}^{d_k} + \Oh(\mu_k \card{\GG}^{d_k-\sfrac12}).
\]
\Cref{thm:Measure} therefore implies that the fibers of a projection of a
definable set have only finitely many possible sizes. The $Y_k(\GG)$
partition~$\GG^I$ according to these fiber sizes, simultaneously for
all sufficiently large~$\GG$. We record these properties in the definition
of a \emph{fiber decomposition}. A~cartoon of a fiber decomposition is
shown in \Cref{fig:Decomp}.

\begin{definition}
Let $X$ be an $\FF$-definable set. A \emph{fiber decomposition} with respect
to~$\pi_I(X)$ is a finite family of $\FF$-definable sets $Y_k$, called
\emph{cells}, together with non-negative $\mu_k \in \QQ$ and $d_k \in \NN$,
for $k \in K$, such that for all sufficiently large~$\GG/\FF$:
\begin{enumerate}[noitemsep, parsep=0.3em]
\item $\GG^I = \bigsqcup_{k \in K} \Rats_{\GG}{Y_k}$, and
\item $\card{\Rats_{\GG} X \cap \pi_I^{-1}(a_I)} = \mu_k \card{\GG}^{d_k}
  + \Oh(\mu_k \card{\GG}^{d_k-\sfrac12})$ for each $a_I \in \Rats_{\GG}{Y_k}$.
\end{enumerate}
\end{definition}

\begin{figure}
\begin{center}
\begin{tikzpicture}[cap=round, scale=0.9]
\draw[line width=.8mm, draw=blue, fill=blue!20!white]
  (0,0) -- (0,2) -- (1,2) -- (1,1) -- (2.5,1) -- (2.5,1.5) -- (4,1.5) --
  (4,2) -- (2.5,2) -- (2.5,2.5) -- (4,2.5) -- (4,4) -- (4,0) -- (0,0);
\draw[line width=.8mm, draw=blue, fill=blue!20!white]
  (5,0) -- (5,2) -- (7,2) -- (7,1) -- (9.5,1) -- (9.5,0) -- (8.5,0) --
  (8.5,1) -- (7,1) -- (7,0) -- (5,0);

\draw[line width=.3mm, dotted] (0,-1) -- (0,5);
\draw[line width=.3mm, dotted] (1,-1) -- (1,5);
\draw[line width=.3mm, dotted] (2.5,-1) -- (2.5,5);
\draw[line width=.3mm, dotted] (4,-1) -- (4,5);
\draw[line width=.3mm, dotted] (5,-1) -- (5,5);
\draw[line width=.3mm, dotted] (7,-1) -- (7,5);
\draw[line width=.3mm, dotted] (8.5,-1) -- (8.5,5);
\draw[line width=.3mm, dotted] (9.5,-1) -- (9.5,5);

\draw[line width=.4mm, cap=round, ->] (-1,-1) -- (11,-1);
\draw[line width=1mm, draw=WildStrawberry] (0,-1) -- (1,-1);
\draw[line width=1mm, draw=Salmon] (1,-1) -- (2.5,-1);
\draw[line width=1mm, draw=WildStrawberry] (2.5,-1) -- (4,-1);
\draw[line width=1mm, fill, draw=RoyalBlue] (4,-1) circle (.6mm);
\draw[line width=1mm, draw=WildStrawberry] (5,-1) -- (7,-1);
\draw[line width=1mm, draw=Purple] (7,-1) -- (8.5,-1);
\draw[line width=1mm, draw=Salmon] (8.5,-1) -- (9.5,-1);

\draw[line width=.3mm, dotted] (-1,0) -- (11,0);
\draw[line width=.3mm, dotted] (-1,1) -- (11,1);
\draw[line width=.3mm, dotted] (-1,1.5) -- (11,1.5);
\draw[line width=.3mm, dotted] (-1,2) -- (11,2);
\draw[line width=.3mm, dotted] (-1,2.5) -- (11,2.5);
\draw[line width=.3mm, dotted] (-1,4) -- (11,4);

\draw[line width=.4mm, cap=round, ->] (-1,-1) -- (-1,5);
\draw[line width=1mm, draw=RawSienna] (-1,0) -- (-1,1);
\draw[line width=1mm, draw=Periwinkle] (-1,1) -- (-1,1.5);
\draw[line width=1mm, draw=Magenta] (-1,1.5) -- (-1,2);
\draw[line width=1mm, draw=Fuchsia] (-1,2) -- (-1,2.5);
\draw[line width=1mm, draw=RedOrange] (-1,2.5) -- (-1,4);
\draw[line width=1mm, fill, draw=Lavender] (-1,1) circle (.6mm);
\draw[line width=1mm, fill, draw=SeaGreen] (-1,2) circle (.6mm);
\end{tikzpicture}
\end{center}
\caption{A blue set $X$ and fiber decompositions with respect to its
projections onto both coordinate axes. The~axis-parallel dotted lines
delineate pieces of the projections over which the fibers have constant
size.}
\label{fig:Decomp}
\end{figure}
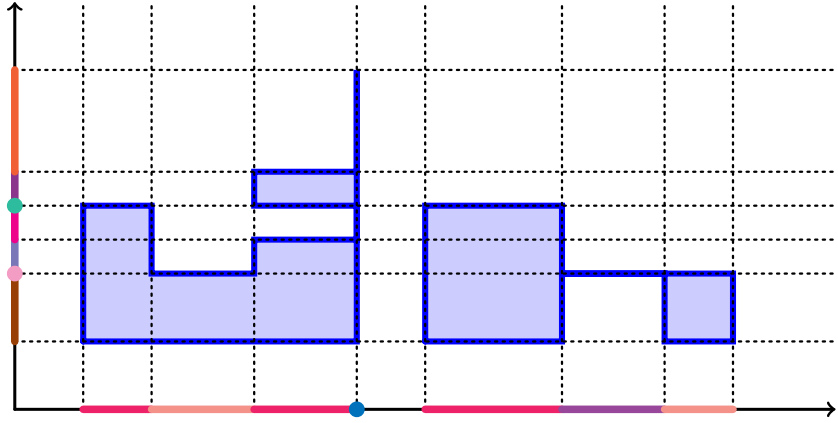

\begin{remark}
If $(Y_k : k \in K)$ is a fiber decomposition with respect to $\pi_I(X)$,
then the inverse images $X_k = X \cap \pi_I^{-1}(Y_k)$, called \emph{blocks},
are definable as well and decompose~$X$.
\end{remark}

\pagebreak %

\begin{remark}
The concept of fiber decomposition resembles the \emph{cylindrical algebraic
decomposition} used in computational real algebraic geometry \cite{AlgoReal}.
The analogy is stronger when we consider fiber decompositions $Y_k^{\ell}$
for a sequence of successive projections $\pi_\ell = \pi_{\set{1, \dots, n-\ell}}$
eliminating a single coordinate in each step. However, instead of insisting
that each projection step have definable ``cylindrical'' fibers, we here want
cells over which the fibers have an approximately~constant~size.
\end{remark}

Friedman, Haran and Jarden \cite{EffectiveCounting} gave an algebraic proof
of \Cref{thm:Measure} using their theory of Galois stratification (see
also~\cite{FieldArithmetic4ed}). Their work yields an effective procedure
for computing a fiber decomposition as well as the dimension and measure
of any definable set over a fixed field. As a corollary to their method
they observed the following crucial fact.

\begin{theorem}[{\cite{EffectiveCounting}}] \label{thm:GaloisStrat}
Let $X$ be an $\FF$-definable in $n$ variables. There is an algorithm which
computes for any $I \subseteq N$ a fiber decomposition of~$\pi_I(X)$.
Moreover, one can compute a bound $m \in \NN$, numbers $d_k \in \NN$ and
non-negative $\mu_k \in \QQ$ such that for every finite extension $\GG/\FF$:
\[
  \card{\Rats_{\GG} X} = \mu_k \card{\GG}^{d_k} + \Oh(\mu_k \card{\GG}^{d_k-\sfrac12}),
\]
where $k \equiv [\GG : \FF] \pmod m$.
\end{theorem}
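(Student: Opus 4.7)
The plan is to derive both claims from \Cref{thm:Measure} together with the constructive Galois-stratification algorithm of Friedman, Haran and Jarden \cite{EffectiveCounting}, which both supplies the effective version of \Cref{thm:Measure} missing from its statement above and exposes the dependence of $\card{\Rats_\GG X}$ on the extension degree $[\GG : \FF]$. For the fiber decomposition of $\pi_I(X)$, I would take the formula $\varphi(x_N; y)$ defining $X$, fix $I \subseteq N$ with $J = N \setminus I$, and read $\varphi$ as a formula in the free variables $x_J$ with ``parameters'' $(x_I, y)$. An application of the effective form of \Cref{thm:Measure} in this reading outputs formulas $\psi_k(x_I; y)$ and pairs $(\mu_k, d_k)$ such that, after substituting $y = b$, the sets $Y_k(\GG) \defas \psi_k(\GG^I; b)$ partition $\GG^I$ for all sufficiently large $\GG/\FF$, with each fiber $\Rats_\GG X \cap \pi_I^{-1}(a_I)$ over $a_I \in Y_k(\GG)$ of size $\mu_k \card{\GG}^{d_k} + \Oh(\mu_k \card{\GG}^{d_k - \sfrac12})$. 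This is precisely a fiber decomposition of $\pi_I(X)$; the content of the first claim is that $(\psi_k, \mu_k, d_k)$ come out explicitly, which is the output of the Galois-stratification procedure.

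For the periodicity claim, I would exploit the Galois-stratification normal form itself. After the algorithm, $X$ is represented as a finite disjoint union of loci of the shape ``$a \in V_j(\GG)$ such that the Frobenius of the residue field at $a$ acts on a chosen finite étale cover $W_j \to V_j$ through an element in a prescribed union of conjugacy classes $C_j \subseteq G_j$''. When $\GG/\FF$ has degree $e$, this Frobenius factors through the $e$-th power of the Frobenius of $\FF$, so which conjugacy class it realizes depends only on $e$ modulo the exponent of $G_j$. Setting $m$ to be the least common multiple of the exponents of the $G_j$ across all strata yields the desired period; within each residue class $k \bmod m$, a Chebotarev-style count for finite fields assembles $\card{\Rats_\GG X}$ as a weighted sum of point counts on the absolutely irreducible strata, each of Lang--Weil shape $\mu \card{\GG}^d + \Oh(\mu \card{\GG}^{d - \sfrac12})$, from whose dominant terms $(\mu_k, d_k)$ are read off.

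The main obstacle is the construction of the Galois stratification itself, which requires iterated quantifier elimination over finite (and auxiliary characteristic-zero) fields together with the computation of Galois groups of polynomials arising in the course of elimination; this is the true algorithmic bottleneck and, as the introduction observes, the reason the procedure has never been implemented in full generality. Once a stratification is available, the remaining steps --- forming the LCM of exponents, enumerating residue classes, and summing weighted Lang--Weil estimates --- are straightforward combinatorial bookkeeping.
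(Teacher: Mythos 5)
The paper does not prove \Cref{thm:GaloisStrat}: it is stated as a corollary that Friedman, Haran and Jarden record in \cite{EffectiveCounting} and is invoked here as a black box, with the surrounding prose even remarking that the underlying Galois-stratification algorithm has apparently never been implemented. Your proposal is therefore a gloss on the cited reference rather than a competitor to a proof in the paper, and as such a gloss it is essentially right: the fiber-decomposition half follows by re-reading the defining formula with $(x_I, y)$ promoted to parameters and applying the effective form of \Cref{thm:Measure}, and the periodicity half comes from tracking how the Frobenius of $\GG$ interacts with the Galois-stratification normal form. One imprecision worth flagging: the period is governed not by the full exponent of the covering group $G_j = \mathrm{Gal}(W_j/V_j)$ but by the order of its constant-field quotient, i.e.\ the degree of the algebraic closure of $\FF$ inside the function field of $W_j$. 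Only the restriction of the Frobenius at a point $a \in V_j(\GG)$ to the constant field extension is determined by $[\GG:\FF]$ alone; its geometric part varies with $a$ and equidistributes by the finite-field Chebotarev theorem, which is what produces the densities $\mu_k$. Taking the lcm of the exponents of the $G_j$ still yields a valid, if non-minimal, modulus $m$, so this is a matter of sharpness rather than correctness.
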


This confirms that the periodicity of cardinalities observed in \Cref{ex:Sqrt-1}
was no accident. Using the data computed by \Cref{thm:GaloisStrat} one
can find an estimate for the cardinality of $\Rats_{\GG}{X}$ by reducing
$[\GG : \FF] \bmod m$ without having to do computations in the (large)
field~$\GG$. It also shows that every pair $(d_k, \mu_k)$ which appears
for some extension of $\FF$ appears for arbitrarily large~fields.

\begin{remark}
The error term in the cardinality estimates in \Cref{thm:Measure} and
\Cref{thm:GaloisStrat} can be bounded absolutely in terms of $\card{\GG}$
and computable constants; cf.~\cite[Theorem~6.4]{EffectiveCounting}.
\end{remark}

The results of Friedman, Haran and Jarden are based on a procedure called
\emph{Galois stratification} which is described in detail in the monograph
\cite{FieldArithmetic4ed}. We are not aware of any implementation of this
procedure and developing one appears to be a major undertaking. %
Nevertheless we can demonstrate its core ideas by repeating \Cref{ex:Cubic}
using facts from Galois theory \cite{Morandi}.

\begin{example}
Fix a finite field $\FF$ of large characteristic and consider the variety~$V$
defined by $f = x^3 + ax^2 + bx + c \in \FF[a,b,c,x]$. The pieces of a fiber
decomposition eliminating~$x$ from~$V$ are given by stratifying the triples
$(a,b,c)$ according to the number of rational roots of the specialization
$f(a,b,c) \in \FF[x]$. The discriminant of $f$ with respect to $x$ is a
non-trivial polynomial in $a, b, c$. Hence, the locus where $f(a,b,c)$ is
inseparable is of lower dimension and may be ignored.

Let $\Omega$ be the splitting field of $f$ over $\FF(a,b,c)$; its Galois group
is the symmetric group~$S_3$. %
The specialization $f(a,b,c)$ --- assumed to be separable --- also defines
a Galois extension over~$\FF$ with cyclic Galois group~$G(a,b,c)$. The number
of rational roots of $f(a,b,c)$ over~$\FF$ is determined by the splitting type
of $f(a,b,c)$~in~$\FF[x]$. This in turn corresponds to the conjugacy class of
$G(a,b,c)$~in~$S_3$.
By an analogue of the Chebotarev density theorem \cite[Theorem~4.4]{EffectiveCounting},
the density of the triples with a given conjugacy class $\CC C$ (whose
splitting type appears full-dimensionally) can be computed simply~via
\[
  \frac{\card{\CC C}}{[\Omega : \FF(a,b,c)]} = \frac{\card{\CC C}}{6},
\]
where we have used that $\FF$ is algebraically closed in~$\Omega$.
This yields at once the following table:
\begin{center}
\setlength{\tabcolsep}{1.5em}
\renewcommand{\arraystretch}{1.3}
\begin{tabular}{c||c|c|c|c|c}
Splitting type  & $[1,1,1]$  & $[1,2]$    & $[3]$       & $[1, 1^2]$ & $[1^3]$  \\ \hline
Conjugacy class & $\id$      & $(1 \; 2)$ & $(1\;2\;3)$ & $\blank$   & $\blank$ \\ \hline
Density         & $\sfrac16$ & $\sfrac36$ & $\sfrac26$  & $0$        & $0$      \\ \hline
Fiber size      & $3$        & $1$        & $0$         & $2$        & $1$
\end{tabular}
\end{center}
and confirms the calculations in \Cref{ex:Cubic}.
\end{example}

\subsection{Entropy profiles and algebraic independence}

The results on dimension and measure enable us to compute the entropy
profiles of the coordinate random vector of any definable set. %

\begin{theorem} \label{thm:Unif}
Let $X$ be an $\FF$-definable set in $n$ free variables and $\xi$ the
corresponding coordinate random vector. For $I \subseteq N$ let
$(Y_k : k \in K)$, be a fiber decomposition with repsect to $\pi_I(X)$
and set $X_k = X \cap \pi_I^{-1}(Y_k)$. For large enough $\GG/\FF$,
the entropy profile satisfies
\[
  h_{\xi(\GG)}(I) = \sum_{\dim_{\GG}(X_k) = \dim_{\GG}(X)}
    \frac{\mu_{\GG}(X_k)}{\mu_{\GG}(X)} \log\left(\frac{\mu_{\GG}(X) \mu_{\GG}(Y_k)}{\mu_{\GG}(X_k)} \card{\GG}^{\dim_{\GG}(Y_k)}\right) +
      \Oh\left(\frac{\log{\card{\GG}}}{\sqrt{\card{\GG}}}\right).
\]
The leading term which does not vanish asymptotically can be effectively
computed from a defining formula for~$X$ and is periodic in the extension
degree~$[\GG:\FF]$.
\end{theorem}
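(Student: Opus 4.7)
The plan is to expand $h_{\xi(\GG)}(I) = -\sum_{a_I}P(a_I)\log P(a_I)$, grouping the sum according to which cell $Y_k$ contains $a_I$, and then to apply the cardinality estimates of \Cref{thm:Measure}. Set $q = \card{\GG}$, $d = \dim_{\GG}(X)$, $\mu = \mu_{\GG}(X)$, and write $f(a_I) = \card{\Rats_\GG{X}\cap\pi_I^{-1}(a_I)}$, so that $P(a_I) = f(a_I)/\card{\Rats_\GG{X}}$ and
\[
  h_{\xi(\GG)}(I) = \log\card{\Rats_\GG{X}} - \frac{1}{\card{\Rats_\GG{X}}}\sum_{k\in K}\sum_{a_I\in\Rats_\GG{Y_k}}f(a_I)\log f(a_I).
\]
The defining property of a fiber decomposition gives $f(a_I) = \mu_k q^{d_k}(1 + \Oh(q^{-1/2}))$ for $a_I \in \Rats_\GG{Y_k}$, and counting $\card{\Rats_\GG{X_k}}$ in two ways (as a sum of fibers, and directly via \Cref{thm:Measure}) identifies $\mu_{\GG}(X_k) = \mu_k\mu_{\GG}(Y_k)$ and $\dim_{\GG}(X_k) = d_k + \dim_{\GG}(Y_k)$.

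Within each cell I would replace $\log f(a_I)$ by the constant $\log\mu_k + d_k\log q$ at a cost of an additive error $\Oh(q^{-1/2})$. Cells with $\dim_{\GG}(X_k) < d$ then carry probability mass $\card{\Rats_\GG{X_k}}/\card{\Rats_\GG{X}} = \Oh(q^{-1/2})$ together with log-terms of size $\Oh(\log q)$, so they contribute only $\Oh(q^{-1/2}\log q)$ in total. For the remaining full-dimensional cells the mass is $\mu_{\GG}(X_k)/\mu$ up to $\Oh(q^{-1/2})$, and $\log\card{\Rats_\GG{X}} = \log\mu + d\log q + \Oh(q^{-1/2})$; using that $\sum_{\dim_\GG(X_k) = d}\mu_{\GG}(X_k) = \mu$ exactly (the measures are rationals independent of $q$, so the asymptotic mass identity forces equality), the two sides combine into
\[
  \sum_{\dim_\GG(X_k) = d}\frac{\mu_{\GG}(X_k)}{\mu}\log\!\left(\frac{\mu\,q^{d-d_k}}{\mu_k}\right) + \Oh\!\left(\frac{\log q}{\sqrt q}\right).
\]
Substituting $d - d_k = \dim_{\GG}(Y_k)$ and $\mu/\mu_k = \mu\,\mu_{\GG}(Y_k)/\mu_{\GG}(X_k)$ yields the asserted closed form.

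The periodicity and computability of the leading term are then immediate from \Cref{thm:GaloisStrat}: it supplies a common period $m$ valid for all cells simultaneously, and the tuple $(\mu_{\GG}(X_k), \mu_{\GG}(Y_k), \dim_{\GG}(X_k), \dim_{\GG}(Y_k))$ depends only on $[\GG:\FF]\bmod m$. The principal technical obstacle is propagating the multiplicative error $1 + \Oh(q^{-1/2})$ in the fiber size through $x\log x$ uniformly over the support: this goes through because the error bound in the fiber decomposition is genuinely proportional to $\mu_k q^{d_k}$, keeping the relative error in $P(a_I)$ of order $\Oh(q^{-1/2})$ even for small probabilities. A minor subtlety is that cells with $\mu_k = 0$ necessarily have empty fibers for large $q$ and contribute nothing to the entropy sum.
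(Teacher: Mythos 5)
Your proof is correct and follows essentially the same route as the paper: group the entropy sum by cells of the fiber decomposition, feed in the cardinality estimates from \Cref{thm:Measure}, and discard the lower-dimensional blocks. The only cosmetic difference is that you work directly with the cell-wise estimate $f(a_I)=\mu_k q^{d_k}(1+\Oh(q^{-1/2}))$ and derive the identities $\mu_\GG(X_k)=\mu_k\mu_\GG(Y_k)$, $\dim_\GG(X_k)=d_k+\dim_\GG(Y_k)$ explicitly, whereas the paper passes through the average fiber size $\card{\Rats_\GG X_k}/\card{\Rats_\GG Y_k}$ and keeps these identities implicit; both land on the same closed form and the same $\Oh(\log\card\GG/\sqrt{\card\GG})$ error.
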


\begin{proof}
The probability of an event $a_I \in \GG^I$ is determined by the relative
size of its fiber in $\Rats_{\GG} X$:
\[
  \Pr[\xi_I(\GG) = a_I] = \frac{\card{\Rats_{\GG} X \cap \pi_I^{-1}(a_I)}}{\card{\Rats_{\GG} X}}.
\]
Whenever $a_I \in \Rats_{\GG}{Y_k}$, we have $\card{\Rats_{\GG} X \cap
\pi_I^{-1}(a_I)} = \frac{\card{\Rats_{\GG}{X_k}}}{\card{\Rats_{\GG}{Y_k}}}$.
\Cref{thm:Measure} guarantees that every event $a_I$ belongs to exactly one
of the pieces $\Rats_{\GG}{Y_k}$ in the fiber decomposition.
For brevity, we adopt the following notation:
\begin{center}
\setlength{\tabcolsep}{1.5em}
\renewcommand{\arraystretch}{1.3}
\begin{tabular}{c||c|c|c}
                     & $X$   & $X_k$   & $Y_k$ \\ \hline
$\mu_{\GG}(\blank)$  & $\mu$ & $\mu_k$ & $\nu_k$ \\ \hline
$\dim_{\GG}(\blank)$ & $d$   & $d_k$   & $e_k$
\end{tabular}
\end{center}
This suffices to compute the marginal entropy asymptotically:
\begin{align*}
  \H{\xi_I(\GG)}
    &= \sum_{k \in K} \sum_{a_I \in \Rats_{\GG}{Y_k}} \frac{\card{\Rats_{\GG} X \cap \pi_I^{-1}(a_I)}}{\card{\Rats_{\GG} X}}
       \log\left(\frac{\card{\Rats_{\GG} X}}{\card{\Rats_{\GG} X \cap \pi_I^{-1}(a_I)}}\right) \\
    &= \sum_{k \in K} \frac{\card{\Rats_{\GG}{X_k}}}{\card{\Rats_{\GG} X}}
       \log\left(\frac{\card{\Rats_{\GG} X} \card{\Rats_{\GG}{Y_k}}}{\card{\Rats_{\GG}{X_k}}}\right) \\
    &= \sum_{k \in K} \frac{\mu_k}{\mu q^{d - d_k}}
       \log\left(\frac{\mu \nu_k}{\mu_k} \card{\GG}^{d+e_k-d_k}\right) +
       \Oh\left(\frac{\log{\card{\GG}}}{\sqrt{\card{\GG}}}\right).
\end{align*}
Elementary properties of the point count $\card{\blank}$ imply that $d_k \le d$
for every~$k$ and equality is achieved at least once. The only summands which
matter asymptotically are those in which~$d_k = d$:
\begin{equation}
  \label{eq:EntX}
  \H{\xi_I(\GG)} = \sum_{d_k = d} \frac{\mu_k}{\mu}
    \log\left(\frac{\mu \nu_k}{\mu_k} \card{\GG}^{e_k}\right) +
    \Oh\left(\frac{\log{\card{\GG}}}{\sqrt{\card{\GG}}}\right),
\end{equation}
which proves the formula. The computability and periodicity follow at once
from \Cref{thm:GaloisStrat}
\end{proof}

\begin{remark}
\Cref{thm:Unif} shows that adding or removing lower-dimensional sets
to or from $X$ does not influence the leading terms in the entropy
profile. Moreover, it is clear that $\sum_{d_k=d} \mu_k = \mu$ and
hence \eqref{eq:EntX} is a convex combination of the logarithmic~terms.
\end{remark}

\begin{remark}
Each entry in $h_{\xi(\GG)}$ is periodic in the extension degree~$[\GG:\FF]$.
Knowing all period lengths and the values taken by the components, it is an
easy exercise in modular arithmetic to determine the distinguished entropy
profiles which appear for arbitrarily large finite~extensions.
\end{remark}

\begin{corollary} \label{cor:Variety}
Let $\Rats_{\FF} X$ be an $\FF$-definable irreducible variety given by a
prime ideal $\SR I \subseteq \FF[x_N]$. Then there exists a tower of finite
fields $\FF = \GG_0 \subseteq \GG_1 \subseteq \dots$ with
\[
  \lim_{n \to \infty} \frac1{\log{\card{\GG_n}}} h_{\xi(\GG_n)}(I)
    = \dim \pi_I(\Rats_{\ol{\FF}} X), \;\; \text{for every $I \subseteq N$}.
\]
\end{corollary}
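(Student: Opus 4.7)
My plan is to reduce the corollary to Theorem~\ref{thm:Unif} plus a single geometric observation. Writing $d = \dim V(\ol{\FF})$ and $e = \dim \pi_I(V(\ol{\FF}))$ and dividing the asymptotic formula of Theorem~\ref{thm:Unif} by $\log \card{\GG}$, the constants $\log(\mu \nu_k / \mu_k)$ and the square-root error term both contribute $\Oh(1/\log \card{\GG})$, so on any tower of finite extensions of $\FF$ for which the fiber decomposition is valid one has
\[
  \frac{h_{\xi(\GG)}(I)}{\log \card{\GG}}
  \;=\; \sum_{d_k = d} \frac{\mu_k}{\mu}\, e_k \;+\; \Oh\!\left(\frac{1}{\log \card{\GG}}\right).
\]
Together with the identity $\sum_{d_k = d} \mu_k = \mu$ from the remark following Theorem~\ref{thm:Unif}, the corollary reduces to the purely geometric claim that $e_k = e$ for every cell with $d_k = d$.

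To set this up I would first enlarge the base by choosing a finite extension $\FF_\ast / \FF$ over which the absolutely irreducible components $V_1, \dots, V_r$ of $V$ are all defined, and then take the tower to be any strictly increasing chain $\FF = \GG_0 \subseteq \GG_1 \subseteq \cdots$ with $\GG_1 \supseteq \FF_\ast$ and $[\GG_n : \FF]$ lying in a fixed residue class modulo the period supplied by Theorem~\ref{thm:GaloisStrat}. On such a tower the Lang--Weil estimate $\card{V_i(\GG_n)} = \card{\GG_n}^d + \Oh(\card{\GG_n}^{d-1/2})$ applies to each $V_i$, so $\dim_{\GG_n}(V) = d$ matches the geometric dimension and the cells of the fiber decomposition admit honest geometric counterparts.

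The crux is the implication $d_k = d \Rightarrow e_k = e$. If $d_k = d$, then the Zariski closure of $X_k(\GG_n)$ is a $d$-dimensional subset of $V$; since $V$ is equidimensional of dimension $d$, this closure must contain at least one component $V_i$. Projecting, the closure $\overline{Y_k(\GG_n)}$ contains $\pi_I(V_i)$. The components $V_1, \dots, V_r$ are permuted by $\mathrm{Gal}(\ol{\FF}/\FF)$, so their projections $\pi_I(V_i)$ are Galois-conjugate and share a common dimension, which is precisely $e = \dim \pi_I(V(\ol{\FF}))$. This yields $e_k \ge e$, and $e_k \le e$ is immediate from $Y_k \subseteq \pi_I(V)$ up to a lower-dimensional error. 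The main obstacle is the passage between combinatorial and geometric notions of dimension: the data $(\mu_k, d_k, e_k)$ produced by Theorem~\ref{thm:GaloisStrat} are extracted from $\GG_n$-point counts, while the targeted equality $e_k = e$ is a statement about the absolute geometry of $V$ over $\ol{\FF}$. Forcing the tower to lie above a field of definition of all absolutely irreducible components is exactly what reconciles these two viewpoints and makes the Lang--Weil estimate available componentwise, after which the claim and hence the corollary follow.
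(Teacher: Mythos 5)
Your proof is correct, and it takes a genuinely different route from the paper's. Both reduce, via the rescaled formula from Theorem~\ref{thm:Unif} and the identity $\sum_{d_k = d}\mu_k = \mu$, to the key claim that $e_k = e$ whenever $d_k = d$. The paper establishes this by induction on the eliminated coordinate, splitting into the case where the dropped variable is algebraically free over $x_I$ (fiber dimension one) and the case where it is dependent (fibers generically zero-dimensional), and in the dependent case it invokes circuit polynomials and a codimension-two argument to control the exceptional locus of positive-dimensional fibers — precisely the patch for the gap in Matúš's original proof noted in the subsequent remark. Your argument instead treats the block $X_k$ as a whole: since $V$ is equidimensional of dimension $d$ (because its absolutely irreducible components are Galois-conjugate, which is where $\FF$-irreducibility enters and exactly what fails in Example~\ref{ex:Reducible}), a $d$-dimensional Zariski closure of $X_k(\GG_n)$ must contain a component $V_i$, so $\overline{Y_k(\GG_n)} \supseteq \pi_I(V_i)$ gives $e_k \ge e$; the reverse inequality $e_k \le e$ follows because a cell with $\mu_k > 0$ has (for large $\GG$) nonempty fibers everywhere and hence lies inside $\pi_I(V)$, whose closure has dimension $e$. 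This sidesteps the per-fiber analysis entirely. Both proofs lean on the identification of the Measure-theorem dimension with the geometric dimension of the Zariski closure, so they are on equal footing there; yours is arguably more conceptual, while the paper's is more elementary and makes the connection to circuit polynomials and the algebraic-matroid structure explicit. One small point worth tightening in your write-up: for the bound $e_k \le e$ you should say explicitly that $\mu_k > 0$ forces the fibers over $Y_k(\GG)$ to be nonempty once $\GG$ is large (since the main term $\sim \card\GG^{d_k - e_k}$ dominates the $\Oh$-error), which places $Y_k(\GG)$ inside $\pi_I(V)(\GG)$ rather than merely ``up to lower-dimensional error.''
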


\begin{proof}
Consider the irreducible components of~$\Rats_{\ol \FF} X$. They are defined
by finitely many equations with coefficients in~$\ol{\FF}$ and hence they are
definable in a finite extension~$\GG_1/\FF$. The irreducible components of
every projection $\pi_I(\Rats_{\ol \FF} X)$ are then definable over~$\GG_1$
as well. Take $\GG_n = \GG_1^n$.
Since at least one of the absolutely irreducible components achieves the
dimension~$\dim \pi_I(\Rats_{\ol \FF} X)$, the Lang--Weil bound implies that
$\dim_{\GG_n} \pi_I(X) = \dim \pi_I(\Rats_{\ol \FF} X)$ for all $I \subseteq N$
and $n \ge 1$. Therefore, the dimensions computed by \Cref{thm:Measure}
coincide with the geometric dimensions.

Set $\GG = \GG_n$ for $n$ large enough, fix a projection $\pi_I$ and adopt
the notation of the proof of \Cref{thm:Unif} where $(Y_k : k \in K)$ is a
fiber decomposition and $(X_k : k \in K)$ the inverse images in~$X$.
The~rescaled entropy is
\begin{equation}
  \label{eq:Hq}
  \frac1{\log{\card{\GG}}} \H{\xi_I(\GG)} = \sum_{d_k = d} \frac{\mu_k}{\mu}
  e_k + \Oh\left(\frac1{\log{\card \GG}}\right),
\end{equation}
where $\sum_{d_k = d} \mu_k = \mu$. Hence, the goal is to show that
$e_k = \dim \pi_I(\Rats_{\ol\FF} X)$ whenever $d_k = d$.

Using induction, it suffices to treat the case when a single coordinate
$x_j$ is projected away, i.e., $I = N \setminus \Set{j}$. Pick a full-dimensional
block~$X_k$. By the properties of a fiber decomposition, every fiber
$\Rats_{\GG} X \cap \pi_I^{-1}(a_I)$ for $a_I \in Y_k = \pi_I(X_k)$ has
the same dimension; subtracting the fiber dimension from $d_k$ yields~$e_k$.
There are two cases to consider:
\begin{paraenum}[label=(\alph*)]
\item \label{cor:Variety:Free}
If $x_j$ is algebraically independent of the remaining variables $x_I$
over~$\FF$, then this remains true over the algebraic extension $\GG/\FF$.
Since $X$ imposes no relation on $x_j$ in terms of $x_I$, there are
$\card{\GG}$ preimages for any $a_I \in \Rats_{\GG}{Y_k}$.

\item \label{cor:Variety:Finite}
Otherwise $x_j$ is algebraically dependent on $x_I$ over~$\FF$.
Let $I' \subseteq I$ be inclusion-minimal with the property that $x_{I'}$ is
algebraically independent and $x_j$ is algebraically dependent on $x_{I'}$,
i.e., $I' \cup \Set{j}$ is a circuit of the algebraic matroid of~$\Rats_{\FF} X$.
Since $\SR I$ is prime, the elimination ideal $\SR I \cap \FF[x_j, x_{I'}]$
is prime and principal, hence generated by an irreducible polynomial~$f$.
Write $f = \sum_{i=0}^r f_i x_j^i$ with $f_i \in \FF[x_{I'}]$. For each
$a_I \in \Rats_{\GG}{Y_k}$ there are at most as many values~$a_j$ for the
preimage as there are solutions to $f(a_j, a_{I'}) = 0$.
We claim that there exists $a_I \in \Rats_{\GG}{Y_k}$ such that
$f(x_j, a_{I'}) \not= 0 \in \GG[x_j]$. It then follows that $f(a_j, a_{I'}) = 0$
has at most $r$~solutions and the special fiber above $a_I$ is zero-dimensional.
Since $Y_k$ is a cell in a fiber decomposition, all fibers above it are
zero-dimensional.
To prove the claim, first note that $f_0 \not= 0$ since otherwise $f$ would
be divisible by~$x_j$. By the independence of $x_{I'}$ there must be another
non-zero coefficient~$f_\ell$ so that the variable $x_j$ appears in~$f$.
We may pick~$f_\ell$ coprime to $f_0$; for if that were impossible, then~$f$
would be reducible. The locus $Z$ of all $a_I \in \pi_I(\Rats_{\ol\FF} X)$
such that $f(x_j, a_{I'}) = 0$ has codimension at least two since $x_{I'}$ are
algebraically independent and must satisfy the two coprime conditions $f_0 =
f_\ell = 0$. But then the preimage of~$Z$ can have dimension at most $d-2+1 =
d-1$ and any cell $\Rats_{\GG}{Y_k}$ contained in $\Rats_{\GG} Z$ does not
satisfy~$d_k = d$.
\end{paraenum}

Inductively, this shows that for each block $X_k$ with $d_k = d$, the
difference $d - e_k$ is precisely the number of times case~\ref{cor:Variety:Free}
applies during the projections, which is the transcendence degree of
$\FF(x_N)/\FF(x_I)$. This yields $e_k = \dim \pi_I(\Rats_{\ol\FF} X)$
as required.
\end{proof}

\begin{example} \label{ex:Reducible}
If the variety is reducible and the components are arranged in a special
position with respect to the coordinate hyperplanes, the conclusion of
\Cref{cor:Variety} does not hold. Consider the variety $V$ defined
by~$xy = 0$ over a finite field~$\FF$. Its projection to the $y$-axis
has a fiber decomposition consisting of two parts: $Y_1 = \Set{0}$ and
$Y_2 = \FF \setminus \Set{0}$ with the following~data:
\begin{center}
\setlength{\tabcolsep}{1.5em}
\renewcommand{\arraystretch}{1.2}
\begin{tabular}{c||c|c|c|c|c}
                     & $V$ & $Y_1$ & $Y_2$ & $V \cap \pi_y^{-1}(Y_1)$ & $V \cap \pi_y^{-1}(Y_2)$ \\ \hline
$\mu_{\FF}(\blank)$  & $2$ & $1$   & $1$   & $1$   & $1$   \\ \hline
$\dim_{\FF}(\blank)$ & $1$ & $0$   & $1$   & $1$   & $1$
\end{tabular}
\end{center}
Hence by \Cref{thm:Unif} we have $\frac1{\log{\card{\FF}}} h_{\xi(\FF)}(y) \to \frac{1}{2}$
which does not match $\dim \pi_y(V) = 1$.
\end{example}

Notice that in this result algebraic independence of coordinate functions in
the limit is explained through diminishing stochastic dependence among the
coordinate random variables relative to the growing field size. Hence, at
least on a suitable sequence of fields, $h_\xi$ is a refinement of the
algebraic matroid of~$X$. This connection was already mentioned in \Cref{sec:KR}.
It can be used to derive the following theorem, originally due to Matúš.

\begin{corollary}[{\cite{MatusAlgebraic}}]
Algebraic matroids are almost-entropic.
\end{corollary}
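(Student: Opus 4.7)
The plan is to deduce this corollary directly from \Cref{cor:Variety}, using two structural features of the almost-entropic region emphasized in \Cref{sec:Background}: $\Aent_N$ is a convex cone and it is closed in~$\RR^{2^N}$. The cone property lifts entropy profiles to arbitrary positive rescalings, and closedness lets us pass to pointwise limits.

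First, I reduce to the case in which the algebraic matroid is realized by a prime ideal $\SR I \subseteq \FF[x_N]$ over a \emph{finite} field~$\FF$. If the matroid is represented over a field of positive characteristic~$p$, then $\SR I$ is generated by finitely many polynomials whose coefficients lie in some finite subfield of~$\ol{\FF_p}$, and one may enlarge $\FF$ to contain them all. In characteristic zero, a standard spreading out argument does the job: realize the ideal over a finitely generated $\ZZ$-subalgebra and reduce modulo a sufficiently large prime. The matroid's rank function depends only on the dimensions of the finitely many elimination ideals $\SR I \cap K[x_I]$, and these dimensions are preserved for all but finitely many primes, so the algebraic matroid of the reduction equals the original.

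Second, \Cref{cor:Variety} applied to the variety~$V$ cut out by~$\SR I$ produces a tower $\FF = \GG_0 \subseteq \GG_1 \subseteq \dots$ of finite fields and coordinate random vectors $\xi(\GG_n)$ satisfying
\[
  \frac{1}{\log \card{\GG_n}}\, h_{\xi(\GG_n)}(I) \;\longrightarrow\; \dim \pi_I(\Rats_{\ol\FF}{V})
\]
for every $I \subseteq N$, and the right-hand side is exactly the rank function of the algebraic matroid. Each $h_{\xi(\GG_n)}$ is an honest entropy profile, hence lies in $\Ent_N \subseteq \Aent_N$. Because $\Aent_N$ is a cone, the rescaled profiles $\frac{1}{\log \card{\GG_n}}\, h_{\xi(\GG_n)}$ also belong to $\Aent_N$, and because $\Aent_N$ is closed, their pointwise limit lies in $\Aent_N$ as well.

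The only delicate point is the characteristic zero reduction, but this is routine algebraic geometry. The true content of the result has already been packaged into \Cref{cor:Variety}, which supplies the explicit distributions witnessing the limiting rank function.
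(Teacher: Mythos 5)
Your final steps---applying \Cref{cor:Variety}, rescaling by $\log\card{\GG_n}$, and using that $\Aent_N$ is a closed convex cone---match the paper exactly. The divergence is in the reduction to a prime ideal over a \emph{finite} field, and there is a genuine gap in positive characteristic: you assert that the generators of the prime ideal have coefficients in a finite subfield of~$\ol{\FF_p}$. That is false in general; the matroid might be algebraic over a field such as $\FF_p(t)$, whose elements are not algebraic over~$\FF_p$. The paper closes this gap by invoking a theorem of Lindström and Piff which says that any matroid algebraic over a field of characteristic~$p$ is already algebraic over~$\FF_p$; this is a real theorem, not bookkeeping. Alternatively, the spreading-out argument you use in characteristic zero would work here too (spread the ideal over a finitely generated $\FF_p$-algebra, reduce at a maximal ideal with finite residue field), but as written you chose not to make it, and the claim you made instead does not hold.

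In characteristic zero your route differs from the paper's. The paper converts the algebraic matroid into a \emph{linear} one via derivations (Ingleton), transfers linearity to $\ol{\FF_p}$ for almost all~$p$ by the model-theoretic Lefschetz principle, and then descends to a finite field, after which the common positive-characteristic endgame takes over. Your proposal spreads the prime ideal out over a finitely generated $\ZZ$-subalgebra and reduces modulo a large prime. This is more direct and avoids the detour through linear matroids, which is a nice economy of means, but it rests on more than a wave at ``routine algebraic geometry'': you need primality of the reduced ideal as well as preservation of the Krull dimensions of all $2^n$ elimination ideals simultaneously, since \Cref{cor:Variety} presupposes a prime ideal and the rank function is read off from those dimensions. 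Both hold for all but finitely many primes, so the argument is salvageable, but those facts deserve to be stated rather than implied.
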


\begin{proof}
Let $\SI M = (N, r)$ be an algebraic matroid over a field~$\FF$. First
suppose that $\FF$ has characteristic zero. Then~$\SI M$ is also linear
over a field of characteristic zero (using \emph{derivations} as in
\cite[Section~6]{Ingleton}). %
The condition that a fixed matroid is linear over a field can be expressed
as a sentence in the first-order language of rings.
The~model-theoretic Lefschetz principle \cite[Corollary~2.2.10]{Marker}
thus implies that $\SI M$ is linear over all but finitely many~$\ol{\FF_p}$.
Since a linear representation is a matrix with finitely many entries,
all algebraic over $\FF_p$, the matroid is linear over a finite field.
In turn, this gives an algebraic representation over a finite field
and it suffices to handle the case in which $\FF$ has characteristic $p > 0$
in the remainder~of~the~proof.

Now suppose that $\SI M$ is represented by a collection $(x_i : i \in N)$
of elements in a field extension~$\FF'/\FF$ where $\FF$ has characteristic
$p > 0$. By results of Lindström and Piff (see~\cite{LindstromFt}) we may
suppose that $\FF = \FF_p$.  The ring homomorphism $\FF_p[t_1, \dots, t_n]
\to \FF'$ given by $t_i \mapsto x_i$ has a prime kernel $\SR I$ which is
generated by finitely many polynomials with coefficients in~$\FF_p$.
They define an irreducible variety~$\Rats_{\FF_p} X$. The algebraic matroid
keeps track of the Krull dimensions of the elimination ideals, i.e.,
$r(I) = \dim \FF_p[t_I]/(\FF_p[t_I] \cap \SR I) = \dim \pi_I(\Rats_{\ol{\FF_p}} X)$.
The claim follows from \Cref{cor:Variety} by noticing that $r = \lim_{n\to\infty}
\frac1{\log{\card{\GG_n}}} h_{\xi(\GG_n)}$ is almost-entropic.
\end{proof}

\begin{remark}
Our proof fills a small hole in Matúš's, where he asserts incorrectly that
if a coordinate $x_j$ is algebraically dependent on the remaining coordinates
on $\Rats_{\ol\FF}{X}$, then the fibers under the projection eliminating~$x_j$
are always zero-dimensional. This problem is addressed in case~\ref{cor:Variety:Finite}
in the proof of \Cref{cor:Variety} by showing that one-dimensional fibers
occur only on a subset of the projection of codimension at least two.
Hence, this phenomenon concerns a lower-dimensional part of~$\Rats_{\ol\FF}{X}$
and does not interfere with the asymptotic calculations.
\end{remark}

\begin{remark}
It would be interesting to relate the normalized entropy profile of a
variety as given by \eqref{eq:Hq} to those of its irreducible components;
in particular whether the former is always in the convex hull of the
latter. This is true for simple examples like $V(xy)$, $V((x-y)(x+y))$
and $V((y-x)(y-x^2))$. Note that the fiber decomposition does not
distribute over~components.
\end{remark}

\subsection{Linear congruences and monomial maps}
\label{sec:Linear}

In general, entropy profiles of definable sets can be computed using
Galois stratification. However, we know of no implementation of this
algorithm. This section briefly considers the special case of toric
varieties, and the associated linear algebra, for which the required
algorithms are widely implemented.

Fix a matrix $A \in \ZZ^{n \times d}$. For each choice of a residue ring
$R = \ZZ/m$, we may consider the random vector $\xi_A(R)$ on $R^n$ supported
uniformly on the $R$-submodule~$\im(\ol A)$. This is the submodule generated
by the columns of the matrix~$\ol A$ arising from reduction of $A$~modulo~$m$.
The~goal is to compute the entropy profile of $\xi_A(R)$ as a function of
$A$~and~$m$. For~each $I \subseteq N$, denote by $A_I$ the submatrix
consisting of the rows indexed by~$I$. Then we have $\im(\ol A_I) =
\pi_I(\im \ol A)$. Since the projection is an $R$-module homomorphism, all
its fibers have the same size $\card{\ker \pi_I}$ which shows that $\xi_A(R)$
is quasi-uniform.

The key to computing $\im(\ol A)$ is the \emph{Smith normal form}.
Recall (e.g., from \cite{StanleySNF}) that every $A \in \ZZ^{n \times d}$
has a Smith normal form $S \in \ZZ^{n \times d}$ whose diagonal entries
$s_1, \dots, s_k$, $k = \min\{n, d\}$, satisfy $s_i \mid s_{i+1}$, and which
is zero outside of its diagonal. Moreover, $S = TA\,U$ for invertible matrices
$T \in \ZZ^n$ and $U \in \ZZ^d$. Reducing the matrix equation $S = TA\,U$
modulo~$m$ gives a Smith normal form of $\ol A$ over~$R$.
The transformations $\ol{T}$ and $\ol{U}$ remain bijective over~$R$, so
$\card{\ker \ol A} = \card{\ker \ol S}$. But $\ol S$ is diagonal, so its
kernel is isomorphic to the direct sum of $R^{d-k}$ and the kernels of
the scalar multiplication maps in its rows. Multiplication by
$s \in \ZZ$ annihilates precisely $\gcd(m,s)$ elements of $\ZZ/m$ (where
$\gcd(m,0) = m$). Then the isomorphism theorem computes the image size as
\begin{equation}
  \label{eq:SubmodSize}
  \card{\im \ol A} = \prod_{i=1}^k \frac{m}{\gcd(m, s_i)}.
\end{equation}
The same works for all submatrices~$A_I$, completing the description of
the entropy profile depending on $m$ and the Smith normal forms of all
submatrices~$A_I$,~$I \subseteq N$.

\begin{example}
This procedure is easy to implement in \Macaulay2:
\begin{minted}{macaulay2}
entropy = (A,m) -> (
  S := (smithNormalForm A)#0;
  r := #entries A; c := #entries transpose A;
  diags := select(apply(0 .. min(r, c)-1, i -> S_(i,i)), s -> s != 0);
  product apply(toList diags, s -> m/gcd(m,s))
);

entropyProfile = (A,m) ->
  apply(subsets(toList(0 .. (#entries A)-1)), I ->
    I => if #I == 0 then 1 else entropy(A^I, m));

A = random(ZZ^4, ZZ^5);
netList entropyProfile(A, 7^3)
\end{minted}
The most common outcome is a uniform polymatroid (where $h(I) = \sum_{i\in I}
h(i)$) but other outcomes are possible. For instance the matrix on the left
yields the almost-entropic point on the~right
\begin{equation*}
  \begin{pmatrix}
    1 & 0 & 2 & 3 & 0 \\
    2 & 9 & 7 & 7 & 7 \\
    9 & 3 & 3 & 3 & 0 \\
    2 & 2 & 7 & 7 & 7
  \end{pmatrix} \implies\;
  \begin{gathered}
    h(1) = h(2) = h(3) = h(4) = 3, \\
    h(1,2) = h(1,3) = h(1,4) = h(2,3) = h(3,4) = 6, \; h(2,4) = 5, \\
    h(1,2,3) = h(1,3,4) = 9, \; h(1,2,4) = h(2,3,4) = 8, \\
    h(1,2,3,4) = 11,
  \end{gathered}
\end{equation*}
where all logarithms are in base~$7$.
\end{example}

\begin{example}
In case $m = p$ is prime, the above situation simplifies to linear algebra
over the field $R = \FF_p$. Then \eqref{eq:SubmodSize} yields $\card{\im \ol A}
= p^{\rank \ol A}$ as expected. When choosing base $p$ for the logarithm,
the entropy profile is precisely the \emph{linear matroid} associated to
the rows of~$\ol{A}$.
\end{example}

\begin{example}
The matrix $A$ determines a monomial map
\begin{equation}
  \label{eq:ToricMap}
  t = (t_1, \dots, t_d) \mapsto (t^{a_1}, \dots, t^{a_n}),
\end{equation}
where $a_1, \dots, a_n$ are the rows of~$A$ and $t^a \defas \prod_{j=1}^d t_j^{a_j}$
is the monomial-vector notation. We are interested in the uniform distribution
supported on image of the algebraic torus $(\FF_q^\times)^d$ under this map.
Recall that the multiplicative group $\FF_q^\times$ is cyclic and let $g$ be
a generator. Then~every $t_j \in \FF_q^\times$ can be written uniquely as
$t_j = g^{x_j}$ for some $x_j \in \ZZ/m$ where $m = q-1$. This~establishes
an isomorphism $\ZZ/m \to \FF_q^\times$. Observe that a monomial rewrites
to $t^a = g^{a \cdot x}$ and therefore conjugation with this isomorphism
transforms the monomial map~\eqref{eq:ToricMap} to the $\ZZ/m$-linear map
$x \mapsto \ol A x$ treated above.

Let $s$ be the least common multiple over all diagonal entries of the Smith
normal forms of $A_I$, $I \subseteq N$. By Dirichlet's theorem on primes
in arithmetic progressions, there exist infinitely many primes of the form
$p = 1 + ks$, $k \ge 1$. For $m = p-1$, every factor in \eqref{eq:SubmodSize}
simplifies to $\gcd(m, s_i) = s_i$ when $s_i \neq 0$ (when $s_i = 0$, treat
it as~$m$ instead). This gives one particular subsequence of distributions
$\xi_A(\FF_p^\times)$ whose entropy profiles are easy to estimate and they
coincide with the uniform distributions on the monomial images of the
algebraic tori~$(\FF_p^\times)^d$.
\end{example}

\begin{remark}
Fix any $R = \ZZ/m$ and let $K_I = \ker \ol{A_I} \subseteq R^d$.
These submodules have a natural semilattice structure induced from $2^N$
via $K_I = \bigcap_{i \in I} K_i$. The entropy of the marginal on~$I$ is
the logarithm of the index $[R^d : K_I]$. As such, entropy profiles defined
via linear congruences fall into the category of abelian group-representable
profiles for which Chan proved that they satisfy the Ingleton
inequality~\cite[Theorem~3.4]{ChanAspects}.
\end{remark}

\section{Information-theoretic post-processing techniques}
\label{sec:Extension}

The main result of \Cref{sec:Unif} exhibits a class of computable and
highly structured entropy profiles. By normalizing the base of the
logarithm to the growing field size, one obtains in the limit
almost-entropic points which include the class of algebraic matroids.
These limit points can be exactly represented and manipulated in a
computer algebra system.
In particular it is possible to decide whether such an almost-entropic
point violates a proposed information inequality or even to use the
sequence to prove that a valid conditional information inequality is
essentially conditional (as seen in \Cref{sec:KR}).
This section provides an overview of some information-theoretic operations
which can be carried out on these examples symbolically and yield an even
broader class of almost-entropic~points.

The almost-entropic region $\Aent_N$, for being a closed set, enjoys a
number of closure properties which $\Ent_N$ lacks (in many but not all~cases).
Results of Matúš and Csirmaz \cite{MatusTwocon,MatusCsirmaz} show that
$\Aent_N$ is closed under convolution with \emph{modular} polymatroids,
i.e., if $h \in \Aent_N$ and $m\colon 2^N \to \RR$ satisfies $\CId{I:J|K}{m} = 0$
for all disjoint $I, J, K \subseteq N$,~then
\[
  (h * m)(I) \defas \min_{J \subseteq I} \Set{ h(J) + m(I \setminus J) }
\]
defines a function $h*m\colon 2^N \to \RR$ which also belongs to~$\Aent_N$.
This entails the closedness of $\Aent_N$ under a number of matroid-theoretic
operations such as principal extensions, free expansions (where applicable)
and tightening to eliminate private information. The ramifications of this
closedness under modular convolution are certainly not yet fully~explored.

On the other hand, there is a family of lift-and-project techniques which
are called \emph{extension properties}, following \cite{ExtensionProps}.
They are theorems of the form:
\begin{equation}
  \text{For each $h \in \Aent_N \cap \CC L$ %
  there exists an $\hat h \in \Aent_M \cap \hat{\CC L}$
  (for $M \supseteq N$) with $\hat{h}|_N = h$.}
\end{equation}
In each instance of an extension property, the linear spaces $\CC L$
and $\hat{\CC L}$ are concrete and usually defined by functional dependence
or conditional independence predicates. The size of the extension
$M \setminus N$ is also concrete. Usually, by factoring, one may assume
it to be a \emph{one-point extension}.
Hence, if $h$ is almost-entropic and satisfies some linear conditions,
then it can be lifted or \emph{extended} into a polymatroid on a larger
ground set which is almost-entropic and satisfies additional linear~conditions.
Extension properties encapsulate ``coding lemmas'' in information
theory which provide an operational characterization or direct construction
of the variables by which the system is extended. The intuition behind their
constructions makes extension properties well-suited for the design of
random variables with special information-theoretic characteristics.

The most famous extension property for $\Ent_N$ is the \emph{Copy lemma};
see \cite{Selfadhe} for an abstract treatment and historical references.

\begin{named*}{Copy lemma}
Let $h \in \Ent_N$ and $L \subseteq N$. First, choose another index~set $N'$
with $\card{N'} = \card{N}$ and $N' \cap N = L$; then set $M = N \cup N'$ and
let $\tau\colon M \to M$ be a bijection that sends $\tau(N') = N$ and fixes~$L$
point-wise. There exists~$\hat h \in \Ent_M$ extending $h$~with
\begin{gather}
  \hat{h}|_{N'} = (\tau^* h)|_{N'}, \; \text{and} \\
  \label{eq:CopyCI} \CId{N:N'|L}{\hat h} = 0.
\end{gather}
\end{named*}

Equation \eqref{eq:CopyCI} states that $N$ is independent of its copy $N'$
given their common restriction~$L$.
This~theorem is used in \cite{DFZ11} to generate new information
inequalities: namely, the key feature eq.~\eqref{eq:CopyCI} of the Copy~lemma
embedding puts $\hat h$ in a special position which is \emph{more extreme}
in $\Aent_M$ than $h$ was in $\Aent_N$;
the Shannon inequalities on $\Aent_M$ interact with this equation
and imply additional inequalities for~$\hat h$, some of which put
extra constraints on the restriction~$h = \hat{h}|_N$. Since $h \in \Ent_N$
is arbitrary, these new constraints are valid information inequalities
and may~be~non-Shannon.

Extension properties can be used in the other direction as well.
Instead of having $h \in \Aent_N$ arbitrary to derive new general
constraints from the extension, one can pick a concrete almost-entropic
point and extend it to obtain another almost-entropic point which can
be projected down in many different ways, leaving ``traces'' of
almost-entropic points in the original space. One~of these new
specimen may be the sought-after counterexample.
We note that the linear constraints in the extension property may not
uniquely determine a point but a whole polyhedron of possible locations
for the extension. In case of the Copy lemma, the underlying construction
is known as the \emph{conditional product} \cite[Section~II.C]{CondIngleton}
and can be carried out on the level of probability distributions (hence
it preserves entropicness, not just almost-entropicness). It is an easy
exercise to transfer it to our setting of definable sets which gives
a precise entropy profile satisfying the constraints of the Copy
lemma. Otherwise, the extended profile is at least partially defined
and this information may be enough to deduce the existence of an
interesting~example.

The following extension plays a major role in the results of Matúš
\cite{MatusTwocon} on convolution. Kaced and Romashchenko also applied
it to the example from \Cref{sec:KR} to get an almost-entropic point
which violates a conditional Ingleton inequality which is known to be
valid for entropic points, proving that conditional information
inequalities are not ``continuous'' as a result of the porous boundary
structure of $\Ent_N$; cf.~\cite[Section~V.A]{KR}.

\begin{named*}{{Slepian--Wolf \cite[Theorem~3]{MatusTwocon}}}
For $h \in \Aent_N$ and $L \subseteq N$, let $I = N \setminus L$ and
$z \not\in N$. For~every $\alpha \ge 0$ there exists a one-point extension
$\hat h \in \Aent_{N\cup z}$ such that: %
$\CId{z|I}{\hat h} = 0$ and
$\hat{h}(K \cup z) = \min\Set{ \alpha + h(K),\, h(I \cup K) }$
for every~$K \subseteq L$.
\end{named*}

In this result, the additional component $z$ is a function of~$I$.
Interesting special cases arise when the minimum is attained twice.
In particular, for $\alpha = \CId{I|L}{h}$ we get that $\hat h(z) =
\CId{I|L}{h}$ and $\hat h(L \cup z) = h(N)$. These two relations
imply $\CId{I|L\cup z}{\hat h} = 0$ and so~$I$ is also recoverable
as a function of $z$ and $L$ together.

\pagebreak %

Finally, the Ahlswede--Körner lemma is well-known and its potential
as an extension property in the context of algebraic matroids was
also recognized in~\cite{ExtensionProps}.

\begin{named*}{{Ahlswede--Körner \cite[Lemma~2]{KacedEquivalence}}}
For $h \in \Aent_N$ and $L \subseteq N$, let $I = N \setminus L$ and
$z \not\in N$. There exists a one-point extension $\hat h \in \Aent_{N\cup z}$
such that: %
$\CId{z|L}{\hat h} = 0$ and
$\CId{K|z}{\hat h} = \CId{K|I}{h}$
for every~$K \subseteq L$.
\end{named*}

The lemma guarantees the existence of a random variable~$z$ which
is a function of~$L$ but, surprisingly, conditioning on $z$ in the
extension has the same effect on subvectors of~$L$ as conditioning
on~$I$, even though a~priori $z$ is unrelated to~$I$. In a sense,
$z$ extracts whatever information $I$ holds about~$L$.

The above-mentioned extension properties are valid for almost-entropic
points: given $h \in \Aent_N$ as input, one gets an extension $\hat h
\in \Aent_M$. It would be interesting to check which of them also hold
in the narrower class of distributions introduced in this paper.
Formally, let $\Aalg_N$ denote the smallest closed convex cone containing
the set of entropy profiles of coordinate random vectors of definable sets
over a finite field, as per \Cref{thm:Unif}; for want of a better term,
we call them \emph{almost-algebraic}. Which of the above extension
properties give $\hat h \in \Aalg_M$ provided $h \in \Aalg_N$?
Answering this question is a win-win scenario.
If an extension property holds in the algebraic setting, this corresponds
to a universal geometric construction on definable sets and provides a
combinatorial property to distinguish them from arbitrary polymatroids;
cf.~\cite{DressLovasz}. Otherwise, the application of the property yields
almost-entropic points outside the scope of the almost-algebraic region~$\Aalg_N$
and thus these points are even more interesting as examples.

All of the above post-processing techniques can and should be formulated
using the language of linear maps on polyhedra in the space~$\RR^{2^N}$.
Moving to this common ground allows an implementation to not only iterate
on a single extension property, as in \cite{DFZ11}, but to mix all of
these techniques freely.
The~polyhedral encoding also provides standard formats for (conditional)
information inequalities, their proofs, interesting examples and even
proofs of essential conditionality, so that this valuable research data
can be shared, maintained~and~reused.

\section*{Acknowledgements}

\setlength{\intextsep}{5pt}%
\setlength{\columnsep}{5pt}%
\begin{wrapfigure}{R}{0.2\linewidth}
\vspace{-.5\baselineskip}%
\centering%
\href{https://doi.org/10.3030/101110545}{%
\includegraphics[width=0.9\linewidth]{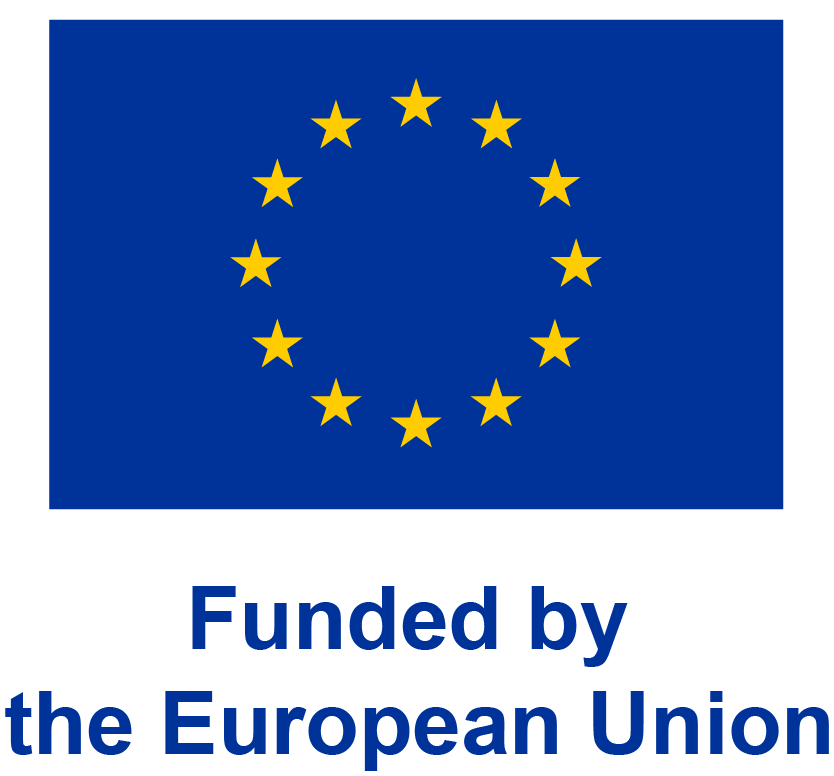}%
}
\end{wrapfigure}
I would like to thank Andrei Romashchenko for drawing my attention to
the paper of Gómez, Mejía and Montoya at the Dagstuhl seminar 22301
``Algorithmic Aspects of Information Theory''.
I am also grateful to Sachi Hashimoto and Sameera Vemulapalli for
discussions on the number-theoric aspects of this topic.
This research was funded by the European Union's Horizon 2020 research
and innovation programme under the Marie Skłodowska-Curie grant agreement
No.~101110545.

\bibliographystyle{tboege}
\bibliography{unif}

\let\etalchar\undefined
\bibliographystylesoft{tboege}
\bibliographysoft{unif}

\end{document}